%% file: main.tex
\documentclass[a4paper]{article}
\usepackage{microtype,fullpage}
\usepackage{amsfonts, amsmath, amssymb, amsthm}
\usepackage{ graphicx, mathtools }
\usepackage{ boxedminipage, etoolbox, framed, ifthen, tabularx,
  tcolorbox, tikz, xifthen }

\newlength{\RoundedBoxWidth}
\newsavebox{\GrayRoundedBox}
\newenvironment{GrayBox}[1]
   {\setlength{\RoundedBoxWidth}{.93\textwidth}
    \def\boxheading{#1}
    \begin{lrbox}{\GrayRoundedBox}
       \begin{minipage}{\RoundedBoxWidth}}
   {   \end{minipage}
    \end{lrbox}
    \begin{center}
    \begin{tikzpicture}%
       \node(Text)[draw=black!20,fill=white,rounded corners,%
             inner sep=2ex,text width=\RoundedBoxWidth]%
             {\usebox{\GrayRoundedBox}};
        \coordinate(x) at (current bounding box.north west);
        \node [draw=white,rectangle,inner sep=3pt,anchor=north west,fill=white]
        at ($(x)+(6pt,.75em)$) {\boxheading};
    \end{tikzpicture}
    \end{center}}
\newenvironment{defproblemx}[2][]{\noindent\ignorespaces%
                                \FrameSep=6pt%
                                \parindent=0pt%
                \vspace*{-1.5em}
                \ifthenelse{\isempty{#1}}{%
                  \begin{GrayBox}{\textsc{#2}}%
                }{%
                  \begin{GrayBox}{\textsc{#2}  parameterized by~{#1}}%
                }
                \begin{tabular*}{\textwidth}{@{\hspace{.1em}} >{\itshape} p{1.8cm} p{0.8\textwidth} @{}}%
            }{
                \end{tabular*}%
                \end{GrayBox}%
                \ignorespacesafterend
            }
\newcommand{\defproblem}[3]{%
  \begin{defproblemx}{#1}
    Input:  & #2 \\
    Task: & #3
  \end{defproblemx}
}%

\usepackage{graphicx}
\usepackage[hidelinks]{hyperref}
\usepackage{mathtools}
\usepackage{tikz}
\usetikzlibrary{calc}
\usepackage{xspace}
\usepackage{booktabs}
\usepackage{multirow}
\usepackage{thm-restate}
\usepackage{cleveref}

\newtheorem{theorem}{Theorem}
\numberwithin{theorem}{section}
\newtheorem{lemma}{Lemma}
\numberwithin{lemma}{section}
\newtheorem{claim}{Claim}
\numberwithin{claim}{section}
\newtheorem{observation}{Observation}
\numberwithin{observation}{section}

\newenvironment{claimproof}{\proof[Proof of claim]}{\endproof}

\theoremstyle{definition}
\newtheorem{definition}{Definition}
\numberwithin{definition}{section}

\theoremstyle{remark}
\newtheorem{remark}{Remark}

\newcommand{\cutenum}[1]{#1_1, \dots, #1_k}
\newcommand{\cutproduct}[1]{#1_1\cup \cdots \cup #1_k}

\newcommand{\Pcal}{\mathcal{P}}
\newcommand{\A}{\mathcal{A}}

\newcommand{\cO}{\mathcal{O}}
\newcommand{\ract}{\mathcal{R}_{\mathrm{act}}}

\newcommand{\minkdmclong}{\textsc{Diverse Min $s$-$t$-Cuts}\xspace}

\newcommand{\minkdmc}{\textsc{Diverse Min $s$-$t$-Cuts}\xspace}

\newcommand{\NP}{\textup{\textsf{NP}}}
\newcommand{\FPT}{\textup{\textsf{FPT}}\xspace}
\newcommand{\POLY}{\textup{\textsf{P}}}

\newcommand{\sharpp}{{\#\textsf{P}}}

\DeclareMathOperator{\cut}{{\sf cut}}
\newcommand{\Mincut}{Minimum $s$--$t$ cut\xspace}
\newcommand{\mincut}{minimum $s$--$t$ cut\xspace}
\newcommand{\mincuts}{minimum $s$--$t$ cuts\xspace}
\newcommand{\maxflow}{maximum $s$--$t$ flow\xspace}

\newcommand{\flownet}{\ensuremath{\mathbb{G}}}

\title{An FPT Algorithm for Diverse Minimum $s$--$t$ Cuts} %

\author{Krishnan Dehaleesan\thanks{University of Bergen, Norway. Email:
\texttt{krishnan.dehaleesan@uib.no}.}
\and
Pål Grønås Drange\thanks{University of Bergen, Norway. Email: \texttt{Pal.Drange@uib.no}.}
\and
Fedor V. Fomin\thanks{University of Bergen, Norway. Email: \texttt{fedor.fomin@uib.no}. Research supported by the Research Council of Norway under BWCA project (grant no.~314528) and by the European Research Council (ERC) under the European Union's Horizon 2020 research and innovation programme (NewPC grant agreement No.  101199930)}
\and
Petr A. Golovach\thanks{University of Bergen, Norway. Email: \texttt{petr.golovach@uib.no}. Research supported by the Research Council of Norway under the BWCA (grant no.~314528) and  Extreme-Algorithms (grant no~355137) projects.}
\and
Laure Morelle \thanks{University of Bergen, Norway. Email: \texttt{Laure.Morelle@uib.no}. Research supported by the Research Council of Norway under the  Extreme-Algorithms (grant no~355137) project.}}

\date{}

\begin{document}

\maketitle

\begin{abstract}
We study the problem of finding a family of diverse minimum edge $s$--$t$ cuts  in a directed weighted graph $G$.
Given integers $k$ and $d$, the task is to decide whether $G$ contains $k$ minimum $s$--$t$ cuts
$C_1, \dots, C_k$ such that for any $i,j \in [k]$, the number of edges in the symmetric difference
$C_i \triangle C_j$ is at least $d$.

For $d \in\{1,2\}$, the problem corresponds to counting minimum $s$--$t$ cuts  in $G$, which is
$\sharpp$-complete [Provan and Ball, SICOMP 1983].
The problem is also known to be $\NP$-complete already for $k = 3$
[de Berg, López Martínez, Spieksma, ISAAC 2024].
Our main result shows that the problem is fixed-parameter tractable (FPT) when parameterized by the combined parameter $k + d$.

The main ingredients of our FPT algorithm build on novel structural properties of diverse minimum $s$--$t$ cuts
and a non-trivial application of the flow-augmentation technique of
Kim, Kratsch, Pilipczuk, and Wahlström [JACM 2025].

\end{abstract}

\section{Introduction}

The {\sc \mincut} problem is one of the most fundamental and central problems in graph algorithms.
Given a directed graph with designated vertices \(s\) and \(t\), the task is to separate \(s\) from \(t\) by removing a set of edges of minimum total capacity.
This problem is deeply connected to the theory of network flows, as formalized by the max-flow min-cut theorem, and to notions of graph connectivity.

It is well known that the {\sc \mincut} problem is solvable in polynomial time via the connection to the \textsc{Maximum Flow} problem, and a long line of research has culminated in an almost-linear time $O(m^{1+ o(1)})$~\cite{chen2025maximumflow}.
However, several natural variants quickly become computationally more challenging.
In particular, counting the number of \mincuts is \sharpp-hard \cite{provan1983complexitycounting}. On the other hand,  the maximum number of
\emph{disjoint} \mincuts could be easily found in polynomial time \cite{wagner1990disjointcuts}. (One proceeds iteratively by computing the leftmost minimum cut and contracting its edges.)

In applications, one often wants not just a single minimum $s$--$t$ cut, but several optimal alternatives from which a user can choose. A naive approach may return cuts that are almost identical, differing in only a few edges. It is therefore natural to ask for a family of minimum $s$--$t$ cuts that is diverse, meaning that every pair differs in at least $d>0$ edges.
This perspective fits into the broader study of diversity in combinatorial optimization
\cite{baste2022diversitysolutions,fomin2024diversepairs,fomin2024diversecollections,hanaka2021findingdiverse,deberg2026disjointtours}. In this context, de~Berg et al.~\cite{deberg2024findingdiverse} introduced the notion of \emph{diverse} \mincuts. The problem of identifying diverse \mincuts occupies an ``intermediate'' position between counting all minimum cuts (\sharpp-complete) and computing pairwise disjoint minimum cuts (in \POLY).

For a  family of  \mincuts  $C_1, C_2, ..., C_k\subseteq E(G)$ in a graph $G$, de~Berg et al. define the following natural notions of their diversity:

\medskip
\begin{tabular}{@{}llp{0.55\linewidth}@{}}
Name & Objective & Interpretation \\
\midrule
min & $\min_{i,j}\lvert C_i \triangle C_j\rvert$ &
Maximize the minimum pairwise ``disjointness''. \\
cov & $\left\lvert \bigcup_i C_i \right\rvert$ &
Maximize coverage of the solution space. \\
sum & $\sum_{i,j}\lvert C_i \triangle C_j\rvert$ &
Maximize the total pairwise ``disjointness''. \\
\end{tabular}

\medskip
\noindent
Here $C_i \triangle C_j$ denotes the symmetric difference of the edge sets of cuts $C_i $ and  $C_j$.

By making use of submodular function minimization, de~Berg et al.~\cite{deberg2024findingdiverse} provided a (strongly) polynomial-time algorithm for computing \mincuts with respect to the \emph{cov} and \emph{sum} measures of diversity.
Interestingly, the computational complexity of the \emph{min} diversity measure appears to be more elusive.
As shown by de~Berg et al.~\cite{deberg2024findingdiverse}, the problem of finding $k=3$ \mincuts under the \emph{min} diversity measure is already \(\NP\)-hard.
Meanwhile, when $k=2$, the three measures converge, and the solution is always to take the two extremal cuts, i.e., the minimum $s$--$t$ cut closest to $s$, and the minimum $s$--$t$ cut closest to $t$ \cite{deberg2024findingdiverse}.
Our main results shed light on the parameterized complexity of this intriguing measure of diversity.

\medskip

\subsection{Overview of Our Results}

We consider a more general capacitated (weighted) variant of the diverse \mincuts problem defined by de~Berg et al.~\cite{deberg2024findingdiverse}.
While the weight of a cut is measured by the total capacity of its edges, the diversity between cuts is not measured by capacities, but rather by the number of edges in the symmetric difference of their edge sets.
Because of that, it is convenient to formulate the problem using two cost functions on the edges of the input graph.
The first cost function \(c\), which we call the \emph{capacity}, represents the weight of an edge and contributes to the total weight of a cut.
The second cost function, called the \emph{multiplicity}, arises from the algorithmic process.

In our algorithm, we contract certain edges of the graph, which may result in a graph containing parallel edges.
Since we subsequently delete parallel edges, it is necessary to preserve their potential contribution both to cut weights and to the diversity measure.
To maintain the correct contribution of a set of parallel edges to a cut, we update the capacity of the resulting edge accordingly.
However, since the diversity of cuts is defined via the symmetric difference of their edge sets, we must additionally maintain the multiplicity $\ell$  of edges.

Here is the formal definition of the problem.

\defproblem%
% NAME
{\minkdmclong{}}%
% INPUT
{A flow network
  $\flownet = (G=(V,E), c: E(G)\to\mathbb{Z}_{\geq 1}, \ell: E(G) \to
  \mathbb{Z}_{\geq 1}, s, t)$, where $c$ is the \emph{capacities} and
  $\ell$ the \emph{multiplicities} of edges, and $s$ and $t$ are two
  terminals, together with positive integers $d$ and $k$.}%
% QUESTION
{Decide whether there exists a family of $k$ minimum capacity $s$--$t$
  cuts $C_1, \dots, C_k \subseteq E$ such that, for each
  $i,j \in [k], \ \ell(C_i \bigtriangleup C_j) \ge d$.}

\noindent
With unit capacities and multiplicities, this corresponds to the \emph{min} measure studied by de~Berg et al.~\cite{deberg2024findingdiverse}.

For $d\in\{1,2\}$, this is the problem of finding $k$ distinct minimum cuts, but
when $d$ is part of the input, the problem is $\NP$-complete, even for $k=3$~\cite{deberg2024findingdiverse}.
For unweighted graphs (graphs with unit capacities and multiplicities), by putting~$d$ to be twice the size of the minimum cut, we obtain  the problem of  computing of $k$ pairwise disjoint minimum cuts, which is solvable in polynomial time~\cite{deberg2024findingdiverse,wagner1990disjointcuts}.

\subsubsection{Outline of the Algorithm}

In our main result, we show that the problem is \FPT when parameterized by $k$
and $d$.

\begin{restatable}{theorem}{theoremdiversecuts}
\label{thm:fpttime}
\minkdmclong can be solved in time
$(kd)^{\cO(k^8d^8)}\cdot n^{\cO(1)}$.
\end{restatable}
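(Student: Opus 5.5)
The plan is to work in the lattice of minimum $s$--$t$ cuts and reduce to a bounded-size structure, then finish with flow augmentation. First, compute a maximum flow and pass to the residual graph. Minimum $s$--$t$ cuts correspond to closed sets (ideals) of the strongly-connected-component DAG of the residual graph (Picard--Queyranne). Contract each strongly connected component, keeping parallel edges merged with summed capacity and summed multiplicity $\ell$. Every minimum cut is a union of edges on the $k'=\lambda$ flow paths, one crossing per path only if capacities are unit. In general, a minimum cut crosses each flow-carrying edge of a decomposition exactly in saturated edges, so I will treat a cut as a choice of ``position'' along the flow structure. Edges of multiplicity at least $d$ are special: any two cuts that differ on such an edge are automatically $d$-diverse on that pair.

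The key structural step is a bound on the ``interesting'' part of a solution. I would prove that if a solution exists, then one exists in which cuts are chosen so that pairwise they are linearly ordered in a controlled way. Take the lattice order $C\le C'$ when the source side of $C$ is contained in that of $C'$. Using submodularity (uncrossing $C_i\wedge C_j$, $C_i\vee C_j$), I aim to show the following. If two solution cuts cross heavily, they can be replaced by meet and join without destroying diversity except on at most $\mathcal O(kd)$ edges. Concretely, $\ell(C_i\triangle C_j)$ only needs to reach $d$, so each pair has a witness set $W_{ij}$ of total multiplicity at most $2d$ (truncating). The union $W=\bigcup W_{ij}$ then has multiplicity at most $k^2d$ and certifies the solution. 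It therefore suffices to guess, for each cut $C_i$, which edges of $W$ it contains, and to find minimum cuts consistent with those guesses. The guessing is done via a bounded number of ``colour classes''. Random separation or color coding over $|W|\le k^2d$ edges gives $(kd)^{\mathcal O(k^2d)}$ branches at this stage, but handling cross-cut interactions will inflate the exponent. This is why I expect the final $k^8d^8$.

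Once the witness edges are fixed, the remaining task for each $i$ is: find a minimum $s$--$t$ cut containing the prescribed edges of $W$ and avoiding the others. These are not independent, because the witnesses are not known explicitly, only their existence. This is where flow augmentation enters. For a single cut $C_i$ of size bounded by the parameter in the relevant region, the Kim--Kratsch--Pilipczuk--Wahlström directed flow augmentation adds edges so that, with probability $(kd)^{-\mathcal O(\cdot)}$, the sought cut becomes a \emph{minimum} cut in an augmented graph that also carries the witness constraints. There, its undeletable/deletable structure can be found by a polynomial-time flow computation with chosen edges forced (infinite capacity) or forbidden (contracted). I would apply augmentation to a graph whose edges outside the witness zone are compressed. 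The earlier contraction already makes the minimum-cut value unchanged, so the augmentation is used to locate the witness edges among arbitrarily many candidates: edges of $W$ must be ``far apart'' on the flow paths. The argument then reduces to guessing, for each of the $\mathcal O(k^2d)$ witness edges, its flow path and its relative order, which is bounded.

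The main obstacle is the structural lemma in the second step. The claim is that after uncrossing, a solution can be assumed to have a bounded-size witness set whose placement relative to the flow paths is bounded-describable, independent of $\lambda$ and $n$. Otherwise the witness-consistent cuts interact through long flow paths. I would first try an exchange argument: push each $C_i$ toward the leftmost cut agreeing with its witness pattern, and show diversity is preserved. This reduces each cut to being determined by $\mathcal O(k^2d)$ ``breakpoints''. Then I would apply flow augmentation $k$ times, once per cut, multiplying success probabilities. Derandomization is standard for flow augmentation and colour coding. The polynomial blow-ups of the parameters through uncrossing, through $k$ augmentation rounds each of budget $\mathcal O(k^2d^2)$ and squared in the colour coding, should yield the $(kd)^{\mathcal O(k^8d^8)}n^{\mathcal O(1)}$ bound of Theorem~\ref{thm:fpttime}.
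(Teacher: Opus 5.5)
Your proposal has a genuine gap. Nothing in it bounds the number of edges of a minimum cut, and every FPT tool you invoke needs exactly that bound.

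Flow augmentation, in both its randomized and deterministic forms, pays $2^{\mathcal{O}(\lambda^4\log\lambda)}$, where $\lambda$ bounds the size of the \emph{whole} minimal cut $Z$ that is to become minimum. There is no version that charges only for a ``relevant region'' or for a witness subset of $Z$, and a minimum cut can have arbitrarily many edges. The budget $\mathcal{O}(k^2d^2)$ you quote at the end is the right one, but you never justify why $|C_i|$ should be that small.

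The witness-set idea does not replace this bound. If a certifying set $W$ with $|W|\le\binom{k}{2}d$ and its membership pattern were known explicitly, each $C_i'$ could be found in polynomial time by forcing and forbidding edges. Locating $W$ by enumeration, however, costs $m^{\mathcal{O}(k^2d)}$, which is only XP. Colour coding just $W$ does not help either. To stop $C_j'$ from also containing the edges meant to separate it from $C_i$, you must forbid colours $\alpha$ with $\alpha_j=0$ in $C_j'$. That is sound only if \emph{every} edge of the true $C_j$ is correctly coloured, which again requires $|C_j|$ to be bounded.

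The uncrossing step is also unsupported. Replacing $C_i,C_j$ by $C_i\wedge C_j$ and $C_i\vee C_j$ may land on a third solution cut, e.g.\ when $C_l=C_i\wedge C_j$, so diversity is not preserved. Given \NP-hardness already for $k=3$, you should not expect such an exchange to be free.

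The missing idea is a dichotomy on cut size, and you already have the right object for it. First contract the non-critical edges that carry flow in a fixed maximum flow. Then contract the strongly connected components of the graph of non-critical edges; this is closely related to the Picard--Queyranne DAG you mention. Reversing the non-critical edges now gives a DAG $\tilde G$ with two properties. Every minimum cut is a prefix of some topological order. Conversely, every prefix of every topological order is a minimum cut, because an edge going forward across a prefix must be critical, hence saturated, and one going backward must be non-critical, hence carries no flow.

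Now suppose some minimum cut $S$ has at least $(kd)^2$ edges. Since parallel edges have been merged, one side contains at least $d(k-1)+1$ distinct endpoints of $S$. Moving the prefix boundary across them in steps of $d$ endpoints yields $k$ minimum cuts, any two of which differ in at least $d$ edges of $S$. A minimum cut with the maximum number of edges is computable in polynomial time by perturbing capacities. So you either answer yes, or you know every minimum cut has fewer than $r=(kd)^2$ edges.

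Only then does the rest work:
\begin{itemize}
\item Colour code all of $U=\bigcup_i C_i$ (at most $kr$ edges) by membership vectors, together with a guessed profile.
\item For each $i$, apply flow augmentation with $\lambda=r$ to the unit-capacity graph; a minimum cut is inclusion-minimal there.
\item Guess the colour and multiplicity of the cut edge on each of the at most $r$ Menger paths.
\item Compute a minimum cut after making undeletable every edge except the path edges of the guessed type.
\end{itemize}
The resulting factor $r^{\mathcal{O}(r^4)}$ is what gives $(kd)^{\mathcal{O}(k^8d^8)}$.
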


\noindent
The algorithm for \Cref{thm:fpttime} is constructive. If the input graph contains $k$ $d$-diverse \mincuts, the algorithm finds these cuts within the same running time.

The algorithm is obtained by combining novel structural properties of diverse \mincuts
and the flow-augmentation technique of Kim et al.~\cite{kim2025flow}.
In the first step of the algorithm, we preprocess the input in polynomial time and either solve the problem or obtain an equivalent instance, where the number of edges in all  \mincuts  is  bounded.
The second step of the algorithm  solves the problem when the number of edges in the minimum cut is bounded.

\begin{figure}[t]
\begin{center}
\scalebox{0.8}{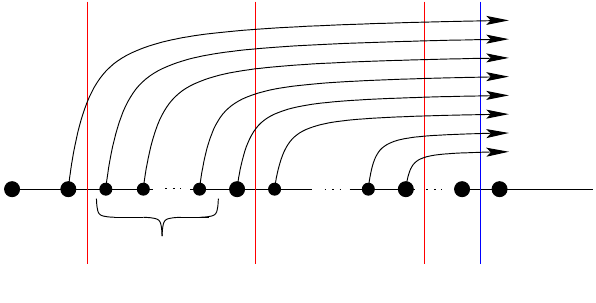}
\caption{Construction of $k$ $d$-diverse cuts; the cuts $(A,B)$ and $(A_1,B_1),\ldots(A_k,B_k)$ are shown by vertical lines. Note that it may happen that $s=v_{j_1}$ or $v_{j_k}=v_i$.}\label{fig:kdsquaredyes_2}
\end{center}
\end{figure}

\Cref{lem:preprocess,lem:mainstep} together provide the proof of \Cref{thm:fpttime}. Indeed, \Cref{lem:preprocess} promises to find $k$ $d$-diverse cuts or output an equivalent instance in which all the minimum cuts have a size bounded by $(kd)^2$. In the latter case, we make use of \Cref{lem:mainstep} that guarantees an \FPT algorithm when the size of the \mincuts are bounded. Substituting the bound $r=(kd)^2$, \Cref{thm:fpttime} follows.

\medskip We briefly sketch the main ideas of the proof  of
\Cref{lem:preprocess} for the case when initially, both the capacity and multiplicity functions have unit values. The actual proof for the general case hinges on the same ideas but is more technical.

Given a directed graph $G$ with a source and sink vertices $s$ and $t$, respectively, we identify the \emph{critical} edges, that is, edges that are included in some \mincut. Then we preprocess the instance by contracting some \emph{residual}, i.e. non-critical, edges. By contractions, we obtain a new instance where the capacities and multiplicities get modified. If an edge-contraction creates multiple edges, we replace them by a single edge whose capacity and multiplicity is the sum of these values of the multiple edges.

We want to contract edges in such a way that \mincuts are not changed. While for
  undirected graphs, contracting all residual edges does not change \mincuts, this is not true for directed graphs. Here we have to be more careful. We use classical Menger's theorem (see, e.g, \cite{diestel_graph_2016,BJensenG09}) that $G$ has an $s$--$t$ cut of size $\mu$ if and only if $G$ has $\mu$ edge-disjoint $s$--$t$ paths $P_1,\ldots,P_\mu$. First, we contract residual edges of these paths. Then we contract the edges of all strongly connected components of $G-\bigcup_{i=1}^\mu E(P_i)$. Finding diverse \mincuts in the  reduced instance is equivalent to finding  diverse \mincuts  in the original one.

The reduced graph \(G^*\) has a very particular structure.
If we reverse the directions of all residual edges of \(G^*\), we obtain a directed acyclic graph   \(\tilde{G}\).
An interesting, and somewhat surprising, property of \(\tilde{G}\) is that for every minimum cut \(C=E(A,B)\) of \(G^*\), there exists a topological ordering \(v_1, \ldots, v_n\) of \(\tilde{G}\) such that
\(A = \{v_1, \ldots, v_i\}\) for some \(i \in [n]\).
Moreover, for each \(i \in [n]\), the partition \((A,B)\) with %defined by
\(A = \{v_1, \ldots, v_i\}\) and \(B = \{v_{i+1}, \ldots, v_n\}\)
defines
%is itself
a minimum cut in \(G^*\).

 By leveraging these properties of cuts in \(G^*\), we can conclude that if the preprocessed instance admits a minimum cut with at least \((kd)^2\) edges, then the graph contains \(k\) minimum \(s\)--\(t\) cuts that are pairwise \(d\)-diverse.
Indeed, since a minimum cut \(C=E(A,B)\) with \((kd)^2\) edges can be expressed as
\(A = \{v_1, \ldots, v_i\}\) and \(B = \{v_{i+1}, \ldots, v_n\}\)
for some topological ordering of \(\tilde{G}\), at least \(d(k-1)+1\)
endpoints of the cut edges must lie in either \(A\) or \(B\). Indeed, otherwise with $d(k-1)$ points on either side, the maximum number of edges from A to B is limited to $d^2(k-1)^2 \le (kd)^2$.

Without loss of generality, suppose that \(A\) contains at least \(d(k-1)+1\) such endpoints.
Since for every \(j \leq i\), the partition
\(A' = \{v_1, \ldots, v_j\}\) and \(B' = \{v_{j+1}, \ldots, v_n\}\)
also defines a minimum cut, we can select \(k\) minimum cuts whose pairwise symmetric differences are of size at least \(d\).

 The proof of \Cref{lem:preprocess} is constructive and allows us to find the actual cuts. Because a \mincut with maximum number of edges can be found in polynomial time, the problem is either solved or is reduced to the case when every \mincut contains less than $(kd)^2$ edges.

\medskip
The algorithm in the proof of \Cref{lem:mainstep} consists of three major steps.
First, we use the \emph{color coding} technique of Alon, Yuster, and Zwick~\cite{alon_color-coding_1995} to highlight edges in different cuts. The aim here is to find each of the cuts separately.
For this, notice that the edges of $\bigcup_{i=1}^kC_i$ for a hypothetical solution $C_1,\ldots,C_k$ can be partitioned into $2^k-1$ classes according to their inclusion to cuts, and at most $rk$ classes are nonempty because each cut contains at most $r$ edges.
Hence, we can guess nonempty classes and highlight their edges.
By this step, the problem boils down to finding a \mincut with specific constraints on the colors of its edges and their multiplicities.

In the second step, we find such cuts using the flow augmentation technique. The main result of Kim et al.~\cite{kim2025flow} (see~\Cref{thm:flowaugdeterm}) is that it is possible to augment a directed graph by adding some edges in such a way that an arbitrary minimal $s$--$t$ cut becomes minimum.  Notice that we are looking for cuts of size at most $r$ in the (non-capacitated) graph $G$ that are inclusion-minimal. Thus, by making use the flow-augmentation technique, we further reduce the problem to the case when we have to find a certain \mincut which remains minimum if we forget about the edge capacities.

Finally, to solve the obtained problem, we again use Menger's theorem. We find a maximum family of at most $r$ edge-disjoint $s$--$t$ paths in $G$ and make guesses about the colors and multiplicities of the edges from a solution cut in each path. After that, we can check whether there is a \mincut satisfying the given constraints using the standard \mincut algorithm.

\subsection{Related work}

The structure of minimum cuts has been studied extensively, beginning with the classical work of Picard and Queyranne, who characterized the full family of minimum cuts in a network and established their lattice structure~\cite{picard1982structureall}. From an algorithmic perspective, Karger and Stein gave a randomized algorithm for computing global minimum cuts~\cite{karger_new_1996}. Wagner studied a related but distinct notion of diversity by considering the problem of finding $k$ pairwise-disjoint $s$--$t$-cuts of minimum total cost in edge-weighted graphs~\cite{wagner1990disjointcuts}. More broadly, diversity has appeared in optimization through the maximum diversity problem, which was analyzed and modeled via zero--one programming by Kuo et al.~\cite{kuo1993analyzingmodeling}.

Recent work has explored diversity across many combinatorial problems, including minimum cuts~\cite{deberg2024findingdiverse}, matchings~\cite{fomin2024diversepairs}, vertex cover~\cite{baste2022diversitysolutions}, shortest paths~\cite{hanaka2022computingdiverse}, spanning trees~\cite{hanaka2021findingdiverse}, and travelling salesman~\cite{deberg2026disjointtours}. Most relevant to our setting, Hanaka et al.\ recently showed that finding $k$ pairwise-disjoint global minimum cuts is \NP-hard~\cite{hanaka2023frameworkdesign}, sharply contrasting with the tractability of classical minimum cut problems and motivating the study of alternative notions of diversity among minimum cuts.

Specifically, as mentioned above, de Berg et al.~\cite{deberg2024findingdiverse} consider all three diversity measures for \minkdmclong, proving that the problem is solvable in polynomial-time with respect to the \emph{cov} and the \emph{sum} measure, but \NP-hard with respect to the \emph{min} measure. They also provide a polynomial-time algorithm for computing disjoint \mincuts.

Drabik and Masařík show that diverse solutions for a wide class of vertex problems can be computed in \FPT{} time parameterized by cliquewidth by extending dynamic programs on cliquewidth decompositions to support general diversity measures~\cite{drabik2026findingdiverse}.

\section{Preliminaries}

We consider directed graphs with neither loops nor multiple edges. The vertex set of a graph $G$ is denoted by $V (G)$ and the edge set by $E(G)$. We set $n = |V (G)|$ and $m = |E(G)|$.
We refer the reader to~\cite{ahuja1993networkflows} for the definition of \emph{flow} and related terminology.
We assume the reader is familiar with basic parameterized complexity, but for terminology and concepts on parameterized algorithms and complexity not defined in this article, we refer the reader to the textbook by Cygan et al.~\cite{cygan2015parameterized}.
Throughout the paper, $c : E \to \mathbb{Z}_{\geq 1}$ is the edge \emph{capacity} and $\ell : E \to \mathbb{Z}_{\geq 1}$ is the edge \emph{multiplicity} (which can alternatively be seen as the number of parallel edges). For a set $X \subseteq E(G)$, we extend $\ell$ additively by defining $\ell(X) = \sum_{e \in X}\ell(e)$.
A \emph{flow network} is a tuple $\flownet = (G, c, \ell, s, t)$, where $s$ is the source and $t$ the target.
We will refer directly to $G = (V, E)$ in the flow network when it is clear from context.
For $a<b \in \mathbb{Z}$, we define $[a,b]= \{a,a+1,\dots, b\}$ and $[b]=[1,b]$.

\begin{definition}[\Mincut]
Let $\flownet$ be a  flow network and $d,k\in \mathbb{Z}_{\geq 1}$.
  An
  \emph{$s$-$t$ cut}, denoted $(A,B)$, is a partitioning of the
  vertices $V$ into $A, B= V \setminus A$ such that $s \in A$ and
  $t \in B$.  We will denote by $E(A, B)$ the set of edges going from
  $A$ to $B$. Slightly abusing notation, we also refer to the edge set $C = E(A, B)$
  as the cut.  The \emph{capacity} of a cut is
  \[
    c(A,B)=c(E(A,B)) = \sum_{e \in E(A,B)} c(e).
  \]
  The cut $(A,B)$ is a \emph{\mincut} if it is a cut of minimum capacity, whose value we denote by $\mu(G)$.
  Equivalently, a $s$-$t$-cut is minimum if its capacity is equal
  to the maximum flow value.
\end{definition}

\begin{definition}[$d$-diverse cut]
Two cuts $C,C'$ are said to be \emph{$d$-diverse} if $\ell(C \triangle C')\ge d$, where $\triangle$ denotes symmetric difference.
The cuts $C_1,\dots,C_k$ are said to be \emph{$d$-diverse} if they are pairwise $d$-diverse.
\end{definition}

\subparagraph{Conventions.}
Note that every edge has a capacity and multiplicity of at least one.  We will
assume that the network has a non-zero maximum flow, i.e., there is a
path from $s$ to $t$.  We also assume that $s$ is a source-vertex and
that $t$ is a sink-vertex, i.e., $\deg^-(s) = 0$ and $\deg^+(t) = 0$.
When we say minimum cuts, we refer to \mincuts unless stated otherwise.
We start by recalling some preliminary properties of \mincuts.

\begin{lemma}[Flow-value Lemma~\cite{kleinberg2006algorithm}]
  \label{lem:flowvaluelem}
  For any $s$--$t$-cut $(A, B)$, and flow $f$, the value of the total flow
  $$v(f) = \sum_{e \in E(A,B)}f(e) -\sum_{e \in E(B,A)}f(e).$$
\end{lemma}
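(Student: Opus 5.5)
We prove the Flow-value Lemma by summing flow conservation over the vertices of $A$. Fix a flow $f$ and an $s$--$t$ cut $(A,B)$. The value of $f$ is defined as the net flow out of the source, $v(f) = \sum_{e \in \delta^+(s)} f(e) - \sum_{e\in\delta^-(s)} f(e)$. Under our conventions $\deg^-(s)=0$, so the second sum vanishes, but we keep it to make the argument uniform. For every vertex $v \in A\setminus\{s\}$ we have $v \neq t$, since $t \in B$. Flow conservation therefore gives
\[
\sum_{e\in\delta^+(v)} f(e) - \sum_{e\in\delta^-(v)} f(e) = 0 .
\]
Adding these identities over all $v\in A$, together with the defining identity at $s$, yields
\[
v(f) = \sum_{v\in A}\Bigl(\sum_{e\in\delta^+(v)} f(e) - \sum_{e\in\delta^-(v)} f(e)\Bigr).
\]

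Next we regroup the double sum edge by edge, according to where the endpoints of each edge $e=(x,y)$ lie. If both $x,y\in A$, then $f(e)$ appears once with sign $+$ (in the term for $x$) and once with sign $-$ (in the term for $y$), so it cancels. If both $x,y \in B$, then $e$ does not appear at all. If $x\in A$ and $y\in B$, then $e\in E(A,B)$ and $f(e)$ appears exactly once, with sign $+$. If $x\in B$ and $y\in A$, then $e\in E(B,A)$ and $f(e)$ appears exactly once, with sign $-$. Hence the right-hand side equals
\[
\sum_{e\in E(A,B)} f(e) - \sum_{e\in E(B,A)} f(e),
\]
which is the claimed identity.

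The only step needing care is the bookkeeping in this regrouping. We must ensure that each edge is counted exactly once per endpoint lying in $A$. This holds because the graph has no loops, and any parallel edges created by contraction are merged into a single edge. Everything else is a direct consequence of conservation.
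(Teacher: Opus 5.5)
Your argument is correct: summing conservation over $A$ and regrouping edge by edge is the standard proof from the cited Kleinberg--Tardos text, and the paper itself gives no proof beyond that citation. Your final caveat is not actually needed, since the regrouping works unchanged with parallel edges (each is a separate summand) and even with loops (which cancel at their own vertex).
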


\begin{restatable}{lemma}{lemmamengerdecomposition}
  \label{lem:mengerdecompostion}
  Let $\mu(G)$ be the size of a minimum $s$--$t$ edge cut in a graph $G$ with
  unit capacities. Then, it is possible to compute $\mu(G)$ edge-disjoint
  $s$--$t$ paths in $\cO(\min\{m^{1/2},n^{2/3}\}m)$ time.
\end{restatable}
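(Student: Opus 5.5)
The plan is to reduce the statement to a maximum-flow computation in an auxiliary unit-capacity network and then decompose the flow into paths. Since all capacities are one, every edge can carry at most one unit of flow. Menger's theorem identifies the value $\mu(G)$ of a \mincut with the maximum number of pairwise edge-disjoint $s$--$t$ paths. It therefore suffices to:
\begin{enumerate}
\item compute an integral \maxflow $f$ with $f(e)\in\{0,1\}$ for every edge $e$;
\item extract $v(f)=\mu(G)$ edge-disjoint $s$--$t$ paths from the support of $f$.
\end{enumerate}

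For the first step I would invoke the classical analysis of Dinic's blocking-flow algorithm on unit-capacity networks, due to Even and Tarjan and to Karzanov. Each phase computes a blocking flow in the layered residual graph in $\cO(m)$ time, because every edge is saturated, and hence removed, at most once per phase. The key point, and the one I expect to be the main obstacle, is bounding the number of phases by $\cO(\min\{m^{1/2},n^{2/3}\})$. The argument runs as follows.

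\emph{The $m^{1/2}$ bound.} After $\sqrt m$ phases, the $s$--$t$ distance in the residual graph exceeds $\sqrt m$. The residual edges split into more than $\sqrt m$ disjoint layers, so some layer contains at most $\sqrt m$ edges. That layer is an $s$--$t$ cut of the residual graph, so the remaining flow value is at most $\sqrt m$. Since each further phase increases the flow by at least one, at most $\sqrt m$ more phases occur.

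\emph{The $n^{2/3}$ bound.} After $n^{2/3}$ phases the distance $D$ exceeds $n^{2/3}$. The vertex layers $L_0,\dots,L_D$ partition $V$, so two consecutive layers $L_i,L_{i+1}$ with $|L_i|+|L_{i+1}|\le 2n/D=\cO(n^{1/3})$ must exist. Since there are no multiple edges, at most $|L_i|\cdot|L_{i+1}|=\cO(n^{2/3})$ residual edges join them. This again bounds the remaining flow value, and so the remaining number of phases, by $\cO(n^{2/3})$.

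In both bounds, residual capacities stay $0/1$, and parallel antiparallel pairs at most double the edge count. This keeps the counting valid.

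For the second step, the integral flow $f$ is $0/1$ on edges. I would repeatedly walk from $s$ along edges with $f(e)=1$, deleting each edge once it is traversed. Flow conservation guarantees the walk reaches $t$. If the walk revisits a vertex, I shortcut the resulting cycle by zeroing its flow. Each edge is handled $\cO(1)$ times, so this decomposition takes $\cO(m)$ time and yields exactly $v(f)=\mu(G)$ edge-disjoint $s$--$t$ paths, by the Flow-value Lemma (\Cref{lem:flowvaluelem}) applied to the cut $(\{s\},V\setminus\{s\})$. Combining the two steps gives total time $\cO(\min\{m^{1/2},n^{2/3}\}m)$.
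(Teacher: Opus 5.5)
Your proposal is correct and takes the same route as the paper: compute a $0/1$ maximum flow with Dinic's algorithm, then peel off $\mu(G)$ edge-disjoint $s$--$t$ paths from the support of the flow in $\cO(m)$ time. The paper only cites the $\cO(\min\{m^{1/2},n^{2/3}\})$ phase bound, whereas you reproduce the Even--Tarjan argument for it, and your cycle-shortcutting step is a harmless refinement that guarantees the extracted trails are simple paths.
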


\begin{proof}
This follows directly from the edge version of Menger's theorem
(see, e.g.,~\cite{BJensenG09,diestel_graph_2016}). Alternatively, using Dinic's algorithm~\cite{ahuja1993networkflows},
the maximum $s$--$t$ flow in a directed unit-capacity network can be computed deterministically in
$\cO(\min\{m^{1/2},n^{2/3}\}m)$ time.

Extracting the paths: the max flow is integral (0 or 1 on each edge). Repeatedly start at $s$, follow edges with flow 1 until you reach $t$; record that path and remove the edges in it and repeat until no such path remains. This takes $\cO(m)$ total over all paths.
\end{proof}

\section{Preprocessing}
\label{sec:preprocess}

In this section, we prove the following important lemma, that we can preprocess
our input graph in such a way that it has at most $(kd)^2$ edges.

\begin{lemma}
  \label{lem:preprocess}
  There is an algorithm that, given an instance $(\flownet,d,k)$ of
  \minkdmc, outputs in polynomial time either:
  \begin{itemize}
  \item that $(\flownet,d,k)$ is a yes-instance of \minkdmc, or
  \item an equivalent instance $(\flownet^*,d,k)$ of \minkdmc such that
    any \mincut in $G^*$ has at most $(kd)^2$ edges.
  \end{itemize}
\end{lemma}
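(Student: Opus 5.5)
We follow the route sketched in the overview, now carried out for general capacities and multiplicities.

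\emph{Computing the critical edges.} First compute a maximum flow $f$ and the residual graph $G_f$. An edge $uv$ is critical (lies in some \mincut) if and only if it is saturated by $f$ and $G_f$ has no $u$--$v$ path, that is, $u$ and $v$ lie in different strongly connected components of $G_f$. This is the Picard--Queyranne characterization: \mincuts are exactly the closed sets of $G_f$ containing $s$ and not $t$. All other edges are residual.

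\emph{Building the reduced instance $G^*$.} Contract every strongly connected component of $G_f$ into a single vertex. Sum capacities and multiplicities over parallel edges that arise, and delete edges inside a component. No \mincut separates two vertices of the same component, so contraction gives a bijection between \mincuts of $G$ and of $G^*$. Moreover, for a cut $(A,B)$ of $G^*$, its edge set as a set of merged edges corresponds to the union of original edge sets. Since $\ell$ is additive, $\ell(C_i\triangle C_j)$ is preserved, so the instance is equivalent. Instead of the Menger-path contraction described in the overview, I would use this SCC condensation, which is cleaner. The resulting $G^*$ has the following property: reversing the residual edges yields a DAG $\tilde G$ (the condensation of $G_f$). \mincuts of $G^*$ are exactly the prefixes $\{v_1,\dots,v_i\}$ of topological orders of $\tilde G$ that contain $s$ and not $t$. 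Concretely, every closed set $A$ can be extended along a topological order, and conversely every prefix is a minimum cut.

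\emph{Finding diverse cuts or bounding cut size.} Next compute a \mincut of $G^*$ maximizing the number of edges. This is polynomial, for example by a min-cost closure in $G_f$ with weights perturbed by edge counts, or via the lattice/closure formulation. If it has at most $(kd)^2$ edges we output $G^*$. Otherwise take such a cut $C=E(A,B)$ and fix a topological order with $A=\{v_1,\dots,v_i\}$.

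Suppose the tails of $C$ in $A$ are at most $d(k-1)$ vertices and the heads in $B$ are also at most $d(k-1)$. Since $G^*$ is simple, $|C|\le d^2(k-1)^2<(kd)^2$, a contradiction. So by symmetry assume $A$ contains at least $d(k-1)+1$ tails $u_1<\dots<u_q$ in the topological order. Every prefix $A_j=\{v_1,\dots,v_j\}$ with $j\ge$ the position of $s$ is a \mincut.

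Now choose prefixes ending just before $u_1$, $u_{d+1}$, $u_{2d+1}$, and so on, giving $k$ cuts. Prefixes must contain $s$; if $s$ is itself a tail, shift indices by one, which the $+1$ slack allows. Between two chosen prefixes $A_a\subsetneq A_b$, consider each tail $u$ in $A_b\setminus A_a$. The cut edges of $C$ leaving $u$ go to $B$, so they lie in $E(A_b,V\setminus A_b)$ but not in $E(A_a,\cdot)$, since their tail is outside $A_a$. There are at least $d$ such tails, each contributing an edge of multiplicity at least $1$, hence $\ell(C_a\triangle C_b)\ge d$.

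\emph{Main obstacle.} The delicate point is the structural claim that \mincuts of $G^*$ correspond exactly to topological prefixes with $s$ in the prefix and $t$ outside. In particular, every prefix containing $s$ but not $t$ must be minimum. This forces us to ensure $s$ comes first and $t$ last in some topological order. We can arrange this because $s$ has no residual in-edges from non-reachable parts, and contracting everything reachable from $s$ in $G_f$ into $s$ (and similarly into $t$) makes $s$ a source and $t$ a sink of $\tilde G$. With that, the counting argument is routine, and all steps are polynomial and constructive.
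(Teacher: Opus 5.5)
Your proof is correct. It shares the paper's skeleton: a reduced graph in which reversing the residual edges gives a DAG $\tilde G$ whose topological prefixes are exactly the minimum cuts; a minimum cut with the most edges; and a pigeonhole on tails and heads to extract $k$ nested prefixes with $d$ new tails between consecutive ones.

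The difference is how you build the reduced graph.
\begin{itemize}
\item The paper identifies the critical edges explicitly, with one max-flow call per edge. It then contracts the non-critical edges carrying flow, contracts the strongly connected components of the remaining non-critical edges, and proves by hand that $\tilde G$ is acyclic and has the prefix property.
\item You contract the strongly connected components of the flow residual graph $G_f$. The bijection of minimum cuts and the prefix structure then follow directly from the Picard--Queyranne closure characterization. This is shorter and needs only one flow computation.
\end{itemize}
The two constructions in fact yield the same quotient. One small slip: $\tilde G$ is the \emph{reverse} of the condensation of $G_f$, not the condensation itself. Your definition of $\tilde G$ (residual edges reversed) still has the right orientation, so the prefix claim holds.

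Your extra contraction of everything reachable from $s$ (respectively, everything reaching $t$) in $G_f$ is a real gain over the paper. The paper only makes $s$ a source of $G^*$, and that does not make $s$ a source of $\tilde G$. For example, a zero-flow dead-end edge $(s,w)$ becomes $(w,s)$ in $\tilde G$ and forces $w$ before $s$. So the paper's claim that $s=v_1$ can be arranged needs exactly your step, or else the observation that vertices forced before $s$ are never endpoints of critical edges.

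One small imprecision in your counting. If $s=u_1$, shifting to prefixes ending just before $u_2,u_{d+2},\dots$ would require $u_{(k-1)d+2}$, which need not exist. The fix is to take prefixes ending \emph{at} $u_1,u_{d+1},\dots,u_{(k-1)d+1}$, as the paper does. Each of these contains $s=v_1$, and the choice uses the bound $d(k-1)+1$ exactly.
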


\subsection{Preliminary Results}

Let us first prove some small preliminary results. First, we recall the folklore result that the \mincut with maximum number of edges can be found in polynomial time; we provide the proof for completeness.

\begin{lemma}\label{prop:findingmaxedgemincut}
    There is an algorithm that, given a directed graph $G$, a capacity $c:E(G)\to\mathbb{Z}_{\geq 1}$, and terminals $s,t\in V(G)$, outputs a \mincut with the maximum number of edges in polynomial time.
\end{lemma}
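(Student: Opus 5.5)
The plan is to reduce the problem to a single ordinary minimum cut computation with modified capacities. These capacities will be lexicographic: first minimize the original capacity, then, among the minimum cuts, maximize the number of edges. Let $m=|E(G)|$ and set $M = m+1$. We define new capacities $c'(e) = M\cdot c(e) - 1$ for every $e\in E(G)$. These are positive integers because $c(e)\ge 1$ and $M\ge 2$. For any $s$--$t$ cut $(A,B)$ we then have
\[
c'(A,B) = M\cdot c(A,B) - |E(A,B)|.
\]

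The key step is to show that a minimum $s$--$t$ cut with respect to $c'$ is a \mincut with respect to $c$ that has the maximum number of edges among all such cuts. Let $(A,B)$ be $c'$-minimum and compare it with any other cut $(A',B')$.

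\emph{Case $c(A',B') < c(A,B)$.} Integrality gives $c(A,B) \ge c(A',B')+1$. Using $0\le |E(\cdot,\cdot)|\le m$, we get $c'(A,B) \ge M c(A',B') + M - m > M c(A',B') \ge c'(A',B')$. This contradicts the minimality of $(A,B)$, so $(A,B)$ is a \mincut for $c$.

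\emph{Case $c(A',B') = c(A,B)$.} Then $c'(A,B)\le c'(A',B')$ is equivalent to $|E(A,B)|\ge |E(A',B')|$. Hence $(A,B)$ has the maximum number of edges among \mincuts.

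Algorithmically, we compute a maximum flow with capacities $c'$ by any strongly polynomial max-flow algorithm. From the residual graph we extract a minimum cut $(A,B)$, for example taking $A$ to be the set of vertices reachable from $s$. The numbers $c'(e)$ have bit-length polynomial in the input size, and a strongly polynomial algorithm does not depend on their magnitude anyway. So the whole procedure runs in polynomial time.

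The only delicate point is choosing the scaling factor so that the tie-breaking term cannot override a difference in original capacity. The integrality of $c$ and the bound $|E(A,B)|\le m < M$ settle this. We do not need the multiplicities $\ell$ here, since the statement concerns the plain number of edges.
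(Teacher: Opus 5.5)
Your proposal is correct and follows the paper's proof exactly: the same reweighting $c'(e)=(m+1)\,c(e)-1$ and a single minimum cut computation with respect to $c'$. Your two-case integrality argument spells out the step that the paper only asserts.
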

\begin{proof}
    Let $M = m+1$.
    Let $c':E(G)\to\mathbb{Z}_{\geq 1}$ be the capacity such that $c'(e) = c(e)\cdot M - 1$ for each edge $e$.
    For a cut $(A,B)$, its capacity with respect to $c'$ is \[\sum_{e \in E(A,B)} c'(e) = M\cdot\sum_{e \in E(A,B)}c(e) -  |E(A,B)|.\]
    Thus, a \mincut of $G$ with respect to $c'$ is a \mincut of $G$ with respect to $c$ that maximize its number of edges.
    Thus, it suffices to compute \mincut of $G$ with respect to $c'$, which can be done in polynomial time~\cite{chen2025maximumflow}.

\end{proof}

We observe that edge contractions preserve flows. Let $G$ be a graph with given source and sink vertices $s$ and $t$, respectively, and a capacity function $c\colon E(G)\rightarrow\mathbb{Z}_{\geq 1}$. Let also $f$ be a feasible $s$--$t$ flow. Consider an arbitrary edge $(u,v)\neq(s,t)$. As $f\colon E(G)\rightarrow \mathbb{Z}_{\geq 0}$, we can use the same convention as with other weight functions, and  assume that if the contraction of $(u,v)$ creates multiple edges, then we define the flow on the resulting single edge by taking the sum of the flows on the multiple edges. We say that the obtained function is an \emph{induced flow}.

\begin{lemma}
\label{claim:inducedflowcontract}
 Let $G=(V,E)$ be a directed $s,t$ flow network with capacities, and let $f$ be a feasible $s$--$t$ flow in $G$. Let $e=(u,v)\in E$ be an edge distinct from $(s,t)$.
 Let $G'$ be the graph obtained from $G$ by contracting $e$, and let $f'$ the induced flow. Then $f'$ is a valid $s$--$t$ flow and $v(f') = v(f)$.
 \end{lemma}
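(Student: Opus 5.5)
We check the three defining properties of an $s$--$t$ flow for $f'$: nonnegativity together with capacity constraints, conservation at every non-terminal vertex, and preservation of the value. Write $w$ for the vertex of $G'$ obtained by merging $u$ and $v$. Let $\phi\colon V\to V(G')$ be the quotient map, which sends $u$ and $v$ to $w$ and is the identity elsewhere. Every edge $e'\neq e$ of $G$ maps to an edge $\phi(e')$ of $G'$. Several edges may map to the same edge, and some edges may become loops: an edge $(v,u)$ becomes a loop at $w$. Following the convention, we discard $e$ and any loops. The capacity and flow of an edge of $G'$ are the sums of $c$ and $f$ over its preimages. Because capacities are additive, $0\le f'(e'')\le c'(e'')$ follows at once from $0\le f\le c$ edgewise.

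For conservation, take a vertex $x$ of $G'$ that is not a terminal. If $x\neq w$, the edges entering or leaving $x$ in $G'$ correspond bijectively, after summing parallel copies, to the edges entering or leaving $x$ in $G$. No edge incident to $x$ is a loop or equals $e$. Hence the in-flow and out-flow at $x$ are unchanged, and conservation is inherited from $G$.

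The only real case is $x=w$, and this is where the argument needs care. The net outflow at $w$ in $G'$ equals the sum over edges leaving $\{u,v\}$ minus the sum over edges entering $\{u,v\}$, computed in $G$. Edges with both ends in $\{u,v\}$, namely $e$ itself and a possible $(v,u)$, contribute $+f-f$ to the sum of the net outflows of $u$ and $v$. So the net outflow at $w$ equals $\mathrm{out}_f(u)-\mathrm{in}_f(u)+\mathrm{out}_f(v)-\mathrm{in}_f(v)$. If neither $u$ nor $v$ is a terminal, this is $0$.

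If one of them is a terminal, say $u=s$, then $w$ plays the role of $s$. Its net outflow is $v(f)+0=v(f)$, and $v(f')=v(f)$ follows directly. The symmetric case $v=t$ is handled in the same way. Both $u=s$ and $v=t$ cannot hold simultaneously, since $e\neq(s,t)$. The remaining configurations $u=t$ or $v=s$ cannot occur, because $s$ has in-degree $0$ and $t$ has out-degree $0$.

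When neither endpoint is a terminal, we still need $v(f')=v(f)$. The value is the net outflow at $s$, and the edges incident to $s$ map bijectively, up to merging parallel copies, to the edges incident to $s$ in $G'$. Since flows add under merging, the net outflow at $s$ is the same in $G$ and $G'$. We could alternatively apply \Cref{lem:flowvaluelem} to a cut with $u,v$ on the same side. The main obstacle is simply the bookkeeping at $w$, namely checking that internal edges and loops cancel. The hypothesis $e\neq(s,t)$ is used only to guarantee that $s$ and $t$ remain distinct in $G'$.
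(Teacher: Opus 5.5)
Your proof is correct and takes the same route as the paper, which simply asserts that conservation at the merged vertex is preserved and that $e\neq(s,t)$ keeps the value unchanged; you supply the bookkeeping the paper leaves implicit (summed capacities, cancellation of $e$ and a possible $(v,u)$ at the merged vertex, and the terminal cases). One minor remark: your cancellation computation already handles $v=s$ or $u=t$ directly, so the source/sink convention is needed only to exclude $e=(t,s)$, and this matters because in the paper's repeated contractions the merged source can acquire incoming edges.
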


 \begin{proof}
 Notice that the flow conservation at the  single vertex obtained from $u$ and $v$ is preserved. Thus, $f'$ is a flow. As $(u,v)\neq(s,t)$,  $v(f') = v(f)$.

\end{proof}

Let $G$ be a directed graph with edge capacity $c$ and let $s,t\in V(G)$ be terminals.
Let $C$ be the set of all edges that appear in some \mincut, and $R = E\setminus C$. We call $C$ to be the set of \emph{critical edges} of the graph and $R$ to be the set of \emph{residual edges} of the graph.
Let $G_R = (V(G),R)$ represent the residual graph.

By the well-known properties of \mincuts and maximum flows, we can make the following observation.

\begin{observation}\label{obs:saturatedcritical}
  If an edge is critical, then it is saturated in every maximum flow. Furthermore, an $e$ edge is critical if and only if its deletion reduces the maximum flow by the capacity of $e$.
\end{observation}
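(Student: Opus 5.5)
The plan is to invoke max-flow min-cut duality together with complementary slackness. Fix an arbitrary maximum flow $f$, so that $v(f)=\mu(G)$.

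\emph{First claim: critical edges are saturated.} Let $e$ be critical, so $e\in E(A,B)$ for some \mincut $(A,B)$. By \Cref{lem:flowvaluelem},
\[
\mu(G)=v(f)=\sum_{e'\in E(A,B)}f(e')-\sum_{e'\in E(B,A)}f(e')\le \sum_{e'\in E(A,B)}c(e')=c(A,B)=\mu(G).
\]
The inequality uses $0\le f(e')\le c(e')$, so equality must hold throughout. Hence every edge of $E(A,B)$ is saturated, in particular $e$, and every edge of $E(B,A)$ carries zero flow.

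\emph{Second claim, forward direction.} Suppose $e$ is critical, witnessed by the \mincut $(A,B)$. In $G-e$ the same partition $(A,B)$ is a cut of capacity $\mu(G)-c(e)$, so $\mu(G-e)\le \mu(G)-c(e)$. Conversely, any cut of $G-e$ is a cut of $G$ whose capacity has dropped by at most $c(e)$. Therefore $\mu(G-e)\ge \mu(G)-c(e)$ always holds, and we get equality.

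\emph{Second claim, backward direction.} Suppose $\mu(G-e)=\mu(G)-c(e)$, and take a \mincut $(A,B)$ of $G-e$. Its capacity in $G$ is $\mu(G)-c(e)$, plus $c(e)$ if $e\in E(A,B)$. This quantity must be at least $\mu(G)$, which forces $e\in E(A,B)$ and gives $G$-capacity exactly $\mu(G)$. So $(A,B)$ is a \mincut of $G$ containing $e$, and $e$ is critical.

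The main subtlety is the edge case $e=(s,t)$ or edges whose removal disconnects $s$ from $t$; these are handled uniformly because cuts are vertex partitions with $s\in A$, $t\in B$, so no real obstacle remains. The proof is routine.
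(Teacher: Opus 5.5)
Your proof is correct: the paper gives no proof of this observation and simply appeals to well-known properties of minimum cuts and maximum flows, and your argument via \Cref{lem:flowvaluelem} and max-flow/min-cut duality is the standard one (the same equality chain the paper uses to prove \Cref{obs:backedgenoflow}). One implicit step is worth stating explicitly: in the backward direction, forcing $e\in E(A,B)$ relies on $c(e)\ge 1$.
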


Recall also the following property of \mincuts and maximum flows.

\begin{observation}\label{obs:backedgenoflow}
    Let $(A,B)$ be a \mincut in a graph $G$ and let $f$ be a maximum $s$--$t$ flow.
    Then every edge $(u,v)\in E(B,A)$ carries zero flow.
\end{observation}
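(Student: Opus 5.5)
We prove Observation~\ref{obs:backedgenoflow} using the Flow-value Lemma (\Cref{lem:flowvaluelem}) together with max-flow min-cut duality. Fix a \mincut $(A,B)$ and a maximum flow $f$. By max-flow min-cut, $v(f)=c(A,B)=\sum_{e\in E(A,B)}c(e)$. By \Cref{lem:flowvaluelem} applied to the cut $(A,B)$,
\[
v(f)=\sum_{e\in E(A,B)} f(e)-\sum_{e\in E(B,A)} f(e).
\]

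We then chain inequalities. Feasibility of $f$ gives $f(e)\le c(e)$ for every edge, and $f(e)\ge 0$, so
\[
v(f)\le \sum_{e\in E(A,B)} f(e)\le \sum_{e\in E(A,B)} c(e)=v(f).
\]
Hence both inequalities are equalities. The first equality says $\sum_{e\in E(B,A)} f(e)=0$. Since every term in this sum is nonnegative, each $f(e)$ with $e\in E(B,A)$ must be zero, which is the claim. As a byproduct, the second equality shows every edge of $E(A,B)$ is saturated; this also justifies the first part of Observation~\ref{obs:saturatedcritical}.

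No step is a real obstacle here. The only point needing care is to invoke the max-flow min-cut theorem, so that $v(f)$ equals the capacity of the minimum cut, and to use the nonnegativity of $f$ when passing from a zero sum to zero on each term.
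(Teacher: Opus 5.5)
Your proof is correct and is essentially the paper's argument: both combine $v(f)=c(A,B)$ with the decomposition of \Cref{lem:flowvaluelem} across $(A,B)$ and the nonnegativity of $f$. The only difference is that the paper takes the saturation of $E(A,B)$ from \Cref{obs:saturatedcritical}, whereas you derive it, together with the zero backward flow, from the squeeze $v(f)\le\sum_{e\in E(A,B)}f(e)\le c(A,B)=v(f)$, which makes your version self-contained.
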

\begin{proof}
    Since $f$ is a maximum flow and $(A,B)$ is a \mincut, we have $v(f) = c(A,B)$. It follows from \Cref{obs:saturatedcritical} (since $(u,v)\in E(A,B)$ edges are critical) that
\[
\sum_{u\in A,\,v\in B} f(u,v) = c(A,B)
\quad\text{and}\quad
\sum_{u\in B,\,v\in A} f(u,v) = 0.
\]
This means that every edge from $B$ to $A$ carries zero flow.
\end{proof}

Finally, we observe that strongly connected components of the residual graph are not separated by \mincuts.

\begin{lemma}\label{claim:scconeside}
    Let $(S,T)$ be a \mincut. Let $G_{R'}= (V',R')$ be a strongly connected component of $G_R$. Then $V' \subseteq S$ or $V' \subseteq T$.
\end{lemma}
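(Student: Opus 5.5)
The plan is to argue by contradiction. The key fact is that a minimum cut contains no residual edges, since by definition a residual edge lies in no minimum cut. So every edge of $E(S,T)$ is critical, and every residual edge crosses $(S,T)$, if at all, only from $T$ to $S$.

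Suppose $V'$ meets both sides, so there are vertices $x \in V'\cap S$ and $y \in V'\cap T$. Because $G_{R'}$ is strongly connected using only edges of $R'\subseteq R$, there is a directed path from $x$ to $y$ whose edges all lie in $R'$. This path starts in $S$ and ends in $T$. It must therefore contain an edge $(u,v)$ with $u\in S$ and $v\in T$. That edge belongs to $E(S,T)$ and is a residual edge.

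On the other hand, $(S,T)$ is a minimum cut, so $E(S,T)$ is the edge set of a minimum cut. Every edge in it therefore appears in some minimum cut and is critical by definition. So $(u,v)\in C$, which contradicts $(u,v)\in R = E\setminus C$. Hence $V'\subseteq S$ or $V'\subseteq T$.

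There is essentially no obstacle here. The only point needing care is the orientation: we must use the path from the $S$-side vertex to the $T$-side vertex, which strong connectivity guarantees, so that the crossing edge goes forward from $S$ to $T$. An edge crossing backward from $T$ to $S$ would not give a contradiction. If one prefers a flow-based argument, \Cref{obs:saturatedcritical,obs:backedgenoflow} could be invoked instead, but the direct definition of critical edges suffices.
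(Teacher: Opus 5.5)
Your proposal is correct and follows essentially the same route as the paper. Both assume the component meets both sides and use strong connectivity within the residual edges to get a residual edge from $S$ to $T$, which must be critical because it lies in the minimum cut $E(S,T)$. Your explicit choice of the path from the $S$-side vertex to the $T$-side vertex makes precise the step the paper states more loosely.
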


\begin{proof}
    Assume toward a contradiction that $V'$ has vertices in $S$ and $T$. This means that we have edges $e$ and $f$ that go from $S$ to $T$ and $T$ to $S$ respectively, since $G_{R'}$ is a strongly connected component. However, this means that $e$ belongs to the \mincut induced by $(S,T)$ which contradicts the fact that $e$ is a residual edge.
\end{proof}

\subsection{Reducing Large Cuts}
\label{sec:G*}

Let $(\flownet,d,k)$ be an instance of \minkdmc.
Recall that $C$ is the set of critical edges, $R$ is the set of residual edges, and that $G_R = (V(G),R)$ is the residual graph.
We construct the instance $(\flownet^*,d,k)$ of \minkdmc of \Cref{lem:preprocess} as follows.
We compute an $s$--$t$ max-flow $f$. We contract all the edges in $R$ that have positive flow through it.

After this, we contract all the strongly connected components of $G_R'$, where $G_R'$ is the graph obtained from $G_R$ by the above contractions. If multiple edges are formed, we replace them with one edge whose capacity (resp. multiplicity) is the sum of the capacities (resp. multiplicities) of these multiple edges.
For the vertices obtained from $s$ and $t$ by the contraction, we keep the same names $s$ and $t$, respectively.
Let \hypertarget{f}{$f^*$} be the flow obtained from $f$ induced on $G^*$.
As the residual edges with a positive flow were contracted, we can make the following remark.

\begin{remark}\label{remark:zeroflowr}
    Edges in $R$ have zero flow in the flow $f^*$.
\end{remark}

The fact that $(\flownet,d,k)$ and $(\flownet^*,d,k)$ are equivalent instances of \minkdmc follows from the next two lemmata. Informally, we first show that the \mincuts of $G^*$ are exactly the \mincuts of $G$.

\begin{restatable}{lemma}{mincutsaftercontraction}
  \label{lem:mincutsaftercontraction}
  For every \mincut $(A,B)$ of $G$, no edge with its endpoints in $A$ and $B$
  is contracted in the construction of $G^*$, and $(A^*,B^*)$ is a \mincut in
  $G^*$, where $A^*$ and $B^*$ are the sets obtained from $A$ and $B$,
  respectively, by the contraction. Furthermore, for any \mincut $(A^*,B^*)$ of
  $G^*$, $(A,B)$ is a \mincut in $G$, where $A^*$ and $B^*$ are obtained from
  $A$ and $B$.
\end{restatable}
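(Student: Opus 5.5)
The plan is to treat the two contraction phases separately and show that each phase preserves the family of \mincuts, with the natural correspondence of vertex partitions. Since edges are merged by summing capacities and multiplicities, a partition $(A^*,B^*)$ of $V(G^*)$ lifts to a partition $(A,B)$ of $V(G)$, obtained by expanding each contracted vertex into its preimage set. The lifted partition satisfies $c_{G^*}(A^*,B^*)=c_G(A,B)$ and $\ell(E_{G^*}(A^*,B^*))=\ell(E_G(A,B))$. So the whole statement reduces to two facts about $G$: (i) no contracted edge crosses a \mincut of $G$, in either direction; and (ii) $\mu(G^*)=\mu(G)$.

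For (ii), \Cref{claim:inducedflowcontract} applied repeatedly shows that the induced flow $f^*$ is a feasible flow of value $\mu(G)$ in $G^*$. Hence $\mu(G^*)\ge\mu(G)$. Conversely, every cut of $G^*$ lifts to a cut of $G$ of the same capacity, so $\mu(G^*)\le\mu(G)$. Along the way we must check that $s$ and $t$ are never merged; this follows since a merge would put an $s$--$t$ path inside a contracted set. Given (i) and (ii), the first part of the lemma follows: the images $(A^*,B^*)$ of a \mincut $(A,B)$ are well defined, and their capacity equals $\mu(G)=\mu(G^*)$. The second part follows because lifting preserves capacity. A \mincut $(A^*,B^*)$ of $G^*$ therefore lifts to a cut of $G$ of capacity $\mu(G)$, which is minimum.

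Fact (i) is where the work lies. Take a \mincut $(A,B)$ of $G$. For the first phase, let $e\in R$ with $f(e)>0$. Since $e$ is residual, it is not in $E(A,B)$. By \Cref{obs:backedgenoflow}, it is not in $E(B,A)$ either. So $e$ lies within $A$ or within $B$, and contracting it keeps $(A,B)$ a well-defined partition, which remains a \mincut by the capacity argument above.

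For the second phase, we want to apply \Cref{claim:scconeside}, but in the intermediate graph $G'$ rather than in $G$. The main obstacle is to check that the residual/critical labelling is stable under the first phase. The edges of $R$ surviving in $G'$ should be exactly the residual edges of $G'$, so that $G_R'$ is the residual graph of $G'$. This holds because the \mincuts of $G'$ correspond bijectively to those of $G$, via the argument above applied to the first phase alone. Consequently an edge of $G'$ crosses some \mincut forward iff its preimage edges do, where a merged parallel edge is critical iff one of its constituents is. With this identification, \Cref{claim:scconeside} applied in $G'$ shows that every strongly connected component of $G_R'$ lies on one side of each \mincut. The same argument then carries through the second contraction phase and completes the proof.
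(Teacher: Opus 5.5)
Your route is the paper's: treat the two contraction phases separately, use residuality plus \Cref{obs:backedgenoflow} for the first phase and \Cref{claim:scconeside} in $G'$ for the second, and combine the induced flow with capacity-preserving lifting of cuts. You are more careful than the paper in two places. You check that the surviving edges of $R$ are exactly the residual edges of $G'$, which the paper uses silently when it invokes \Cref{claim:scconeside} in $G'$. You also check that $s$ and $t$ are never merged.

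However, one step is wrong as written. In (ii) you argue that every cut of $G^*$ lifts to a cut of $G$ of the same capacity, and conclude $\mu(G^*)\le\mu(G)$. Lifting gives the opposite inequality. Each cut of $G^*$ has the capacity of some cut of $G$, hence at least $\mu(G)$, so $\mu(G^*)\ge\mu(G)$. This is the same bound the induced flow $f^*$ already gives. Contraction can only raise the min-cut value, so no argument that avoids (i) can yield the upper bound.

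The inequality $\mu(G^*)\le\mu(G)$ must come from (i). A \mincut $(A,B)$ of $G$ is not crossed by any contracted edge, so each contracted set lies on one side. Its image $(A^*,B^*)$ is therefore a cut of $G^*$ of capacity $c(A,B)=\mu(G)$. This is exactly how the paper argues: $c'(A',B')=c(A,B)=v(f)=v(f')$, so $f'$ is maximum and $(A',B')$ is minimum.

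Three later steps depend on this upper bound:
\begin{itemize}
\item the second half of the lemma, where a \mincut of $G^*$ lifts to a cut of capacity $\mu(G^*)$ and you need $\mu(G^*)=\mu(G)$ to call it minimum in $G$;
\item the phase-1 bijection you use to identify the residual edges of $G'$;
\item the $s,t$ non-merging check, which only yields a contradiction via (i).
\end{itemize}

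The fix is to reorder. Prove (i) first; your phase-1 and phase-2 arguments for it are fine and not circular. Then get $\mu(G^*)\ge\mu(G)$ from $f^*$ and $\mu(G^*)\le\mu(G)$ from (i). With that repair the proof is complete.
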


\begin{proof}
Denote by $G'$ the graph obtained by the contraction of all the edges in $R$ that have positive flow with respect to $f$, and denote by $c'$ the obtained capacity function.
 Consider an arbitrary minimum cut  $(A,B)$ in $G$. Then every edge $(u,v)$ with $u\in A$ and $v\in B$ is critical and, therefore, is not contracted. For every edge $(u,v)$ with $u\in B$ and $v\in A$, $f(u,v)=0$ by \Cref{obs:backedgenoflow}. Thus, these edges are also not contracted. This implies that $(A',B')$, where $A'$ and $B'$ are the sets of vertices obtained from $A$ and $B$, respectively, by the contractions, is a cut in $G'$. Also, we never contract an edge between $s$ and $t$. Then the induced flow $f'$ is feasible  by \Cref{claim:inducedflowcontract} and $v(f')=v(f)$. As $c'(A',B')=c(A,B)=v(f)$, we have that $f'$ is a maximum flow in $G'$ and $(A',B')$ is a \mincut in $G'$

Let  $(A',B')$ be a \mincut of $G'$. Consider the sets of $A$ and $B$ obtained by ``uncontracting''
of the edges contracted in the construction of $G'$. Since $c'(A',B')=v(f')=v(f)=c(A,B)$, we have that $(A,B)$ is a \mincut of $G$.

This proves that the \mincuts of $G'$ are exactly the \mincuts of $G$.

Hence, it suffices to prove that the \mincuts of $G^*$ are exactly the \mincuts of $G'$.

By~\Cref{claim:scconeside}, for any \mincut $(A',B')$ of $G'$, each contraction happens either
in $A'$ or in $B'$. Therefore, $f^*$ is a maximum flow in $G^*$, and
$(A^*,B^*)$ is a \mincut of $G^*$, where
$A^*$ and $B^*$ are the sets obtained from $A$ and $B$, because $c^*(A^*,B^*)=v(f^*)$.
If $(A^*,B^*)$ is a \mincut of~$G^*$, then by the same argument as above,
$(A',B')$ is a \mincut in~$G'$, where~$A^*$ and~$B^*$ the sets  obtained from~$A'$ and~$B'$.
This concludes the proof.

\end{proof}

\Cref{lem:mincutsaftercontraction} establishes the bijection~$\varphi$ between \mincuts in~$G$ and~$G^*$, where $\varphi(A,B)=(A^*,B^*)$. By the definition of the multiplicity function~$\ell^*$, we immediately obtain the following property.

\begin{lemma}\label{lem:equiv}
  Let  $(A_1,B_1)$ and $(A_2,B_2)$
  be two \mincuts of $G$, and let
  $\varphi(A_1,B_1) = (A_1^*,B_1^*)$ and $\varphi(A_2,B_2) = (A_2^*,B_2^*)$.
  Then
  \[
    \ell(C_1\triangle C_2) = \ell(C_1^*\triangle C_2^*),
  \]
  where
  $C_i = E_G(A_i,B_i)$ and
  $C_i^*=E_{G^*}(A_i^*,B_i^*)$ for $i \in \{1, 2\}$.
\end{lemma}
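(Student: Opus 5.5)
The plan is to track each edge of $G^*$ back to the set of edges of $G$ that were merged into it, and to show that a cut of $G^*$ contains such a merged edge exactly when the corresponding cut of $G$ contains all of its preimages. First I fix notation. Every vertex $x$ of $G^*$ corresponds to a set $\pi^{-1}(x)\subseteq V(G)$ of original vertices, where $\pi\colon V(G)\to V(G^*)$ is the contraction map. Every edge $e^*=(x,y)$ of $G^*$ corresponds to the set $M(e^*)=\{(u,v)\in E(G): \pi(u)=x,\ \pi(v)=y\}$, the original edges that became parallel and were merged. By construction, $\ell^*(e^*)=\ell(M(e^*))$, using that multiplicities are summed both at each parallel-edge merge and across iterated contractions. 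This additivity survives the two-stage construction by a trivial induction on the number of merges. Edges of $G$ with $\pi(u)=\pi(v)$ are exactly those that are contracted or become loops; they lie in no $M(e^*)$.

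Next I relate cuts. For a \mincut $(A,B)$ of $G$, \Cref{lem:mincutsaftercontraction} says no edge between $A$ and $B$ is contracted. More strongly, its proof (together with \Cref{claim:scconeside}) shows that every contracted class $\pi^{-1}(x)$ lies entirely inside $A$ or entirely inside $B$. So $A=\pi^{-1}(A^*)$ and $B=\pi^{-1}(B^*)$. Consequently, for any edge $e^*=(x,y)$ of $G^*$, either every edge of $M(e^*)$ goes from $A$ to $B$ or none does, according to whether $x\in A^*$ and $y\in B^*$. Also, an edge $(u,v)\in E_G(A,B)$ has $\pi(u)\in A^*$ and $\pi(v)\in B^*$, so $\pi(u)\neq\pi(v)$ and the edge lies in some $M(e^*)$. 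Hence
\[
  C_i=\bigcup_{e^*\in C_i^*} M(e^*),
\]
and this union is disjoint.

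Finally I conclude. Since the sets $M(e^*)$ are pairwise disjoint, and each one is either fully contained in or fully disjoint from each $C_i$, the symmetric difference decomposes as
\[
  C_1\triangle C_2=\bigcup_{e^*\in C_1^*\triangle C_2^*} M(e^*).
\]
Summing $\ell$ over this disjoint union gives $\ell(C_1\triangle C_2)=\sum_{e^*}\ell(M(e^*))=\ell^*(C_1^*\triangle C_2^*)$.

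The only delicate point, and the step I expect to be the main obstacle, is justifying that each contracted vertex class lies entirely on one side of every \mincut of $G$, rather than merely that no crossing edge is contracted. I would argue this in two stages. In the first stage, contracting positive-flow residual edges, each contracted edge has both endpoints on the same side by \Cref{lem:mincutsaftercontraction}, so connected components of contracted edges stay on one side. In the second stage, the strongly connected components of $G_R'$ lie on one side by \Cref{claim:scconeside}, applied in $G'$ to the image cut.
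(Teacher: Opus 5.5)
Your proof is correct and follows the paper's route. The paper simply declares the lemma immediate from the definition of $\ell^*$ (merged parallel edges carry the sum of their multiplicities) together with the bijection $\varphi$ from \Cref{lem:mincutsaftercontraction}; you make explicit what it leaves implicit, namely that every contracted vertex class lies on one side of each minimum cut, so that $C_i=\bigcup_{e^*\in C_i^*}M(e^*)$ is a disjoint union and the symmetric difference decomposes edge-class by edge-class.
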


\noindent
This proves the equivalence of  $(\flownet,d,k)$ and $(\flownet^*,d,k)$.

\bigskip

Since $(\flownet,d,k)$ and $(\flownet^*,d,k)$ are equivalent, from now on, we can assume that~$s$ is a source and~$t$ is a sink in~$G^*$. Otherwise, we can simply delete the edges of the form $(v,s)$ and $(t,v)$ as they are irrelevant.
Let~$\tilde{G}$ be the graph obtained from~$G^*$ by reversing the direction of residual edges.
Hence, the edge set of~$\tilde G$ is~$C\cup \tilde R$, where~$\tilde R$ represent the reversed residual edges.
We use~$\tilde{G}$ to prove some nice properties of~$G^*$.

\begin{lemma}\label{lemma:tildeGdag}
    $\tilde{G}$ is a directed acyclic graph (DAG).
\end{lemma}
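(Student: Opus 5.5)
Suppose $\tilde G$ contains a directed cycle $Z$. We derive a contradiction from the flow $f^*$ and the structure of $G^*$. Two facts drive the argument. First, after the second contraction step every strongly connected component of the residual graph $G^*_R$ is a single vertex. Hence $G^*_R$ is acyclic, and $Z$ must use at least one critical edge. Second, by \Cref{remark:zeroflowr}, every residual edge carries zero flow in $f^*$. By \Cref{obs:saturatedcritical}, every critical edge $e$ is saturated, so $f^*(e)=c^*(e)\ge 1$; this uses \Cref{lem:mincutsaftercontraction} to know that critical edges of $G^*$ are exactly the surviving critical edges of $G$.

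The plan is to use the residual network of the maximum flow $f^*$ in $G^*$. An edge $e=(u,v)$ is critical if and only if it is saturated by a maximum flow and there is no $u$-to-$v$ path in the residual network $G^*_{f^*}$. For an arbitrary saturated edge this characterization could fail, but we only need one direction. If $e=(u,v)$ is critical, there is a minimum cut $(A,B)$ with $u\in A$, $v\in B$. In $G^*_{f^*}$ no edge leaves $A$: forward edges from $A$ to $B$ are saturated, and backward residual arcs from $A$ to $B$ would require positive flow on edges from $B$ to $A$, which \Cref{obs:backedgenoflow} excludes. Therefore $v$ is not reachable from $u$ in $G^*_{f^*}$.

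Now look at the arcs of $Z$ in the residual network. A reversed residual edge $\tilde e=(v,u)$ of $\tilde G$ comes from $e=(u,v)\in R$. Since $f^*(e)=0<c^*(e)$, the forward arc $(u,v)$ lies in $G^*_{f^*}$. A critical edge $(u,v)$ of $Z$ carries positive flow, so the backward arc $(v,u)$ lies in $G^*_{f^*}$. So the natural move is to compare $Z$ with its reversal $\overleftarrow Z$. Every arc of $\overleftarrow Z$ is present in $G^*_{f^*}$: a reversed residual arc $(v,u)$ of $\tilde G$ becomes $(u,v)$, which is in the residual network, and a critical arc $(u,v)$ becomes $(v,u)$, also in the residual network. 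Pick a critical edge $(u,v)$ on $Z$. Then $\overleftarrow Z$ minus the arc $(v,u)$ is a path from $u$ to $v$ in $G^*_{f^*}$. This contradicts the reachability fact from the second paragraph.

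The main obstacle is getting the directions right and checking that the invoked properties survive the contractions. The key points are that $f^*$ is a maximum flow of $G^*$, which was shown inside \Cref{lem:mincutsaftercontraction}, and that critical edges in $G^*$ are saturated. I also note that the acyclicity of $G^*_R$ is then not strictly needed. It only guarantees that $Z$ contains a critical edge, and a cycle made purely of reversed residual edges is already excluded because the SCCs were contracted.
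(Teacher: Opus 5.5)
Your proof is correct, but it is organized differently from the paper's.

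\textbf{The paper's route.} It argues directly in $\tilde G$. Having found a critical edge on the cycle and a minimum cut $(A,B)$ containing it, the paper looks at an edge of the cycle that returns from $B$ to $A$. This edge cannot be critical: by \Cref{obs:backedgenoflow}, a $B$-to-$A$ edge carries no flow, whereas critical edges are saturated. Nor can it be a reversed residual edge: its original would then be an $A$-to-$B$ edge, hence an edge of the minimum cut $E(A,B)$, hence critical.

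\textbf{Your route.} You use the fact that residual edges of $G^*$ carry zero flow and critical edges carry positive flow. From this, reversing any arc of $\tilde G$ gives an arc of the residual network $G^*_{f^*}$; in fact $G^*_{f^*}$ is exactly the reversal of $\tilde G$. You then use that the source side of a minimum cut is closed in $G^*_{f^*}$.

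\textbf{Comparison.} Both arguments take the same cut $(A,B)$ through a critical edge of the cycle and show the cycle cannot come back from $B$ to $A$. The paper's version is more elementary and never uses \Cref{remark:zeroflowr}. Yours genuinely needs that remark: a saturated non-critical edge would contribute no forward arc to $G^*_{f^*}$, and the reversed cycle would then not lie in the residual network. In exchange, your route identifies $\tilde G$ with the reversed residual network of a maximum flow. This places the lemma in the Picard--Queyranne picture, where minimum cuts correspond to closed sets of the residual network. It also gives a clean reading of \Cref{lem:ordering}: a prefix of a topological order of $\tilde G$ that contains $s$ but not $t$ is precisely a closed set of $G^*_{f^*}$.

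\textbf{A small point.} Your closing remark that acyclicity of $G^*_R$ is ``not strictly needed'' is muddled. Your argument does need it, to guarantee a critical edge on $Z$, and that is exactly what the acyclicity provides. You can simply drop that sentence.
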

\begin{proof}
  Toward a contradiction, consider the existence of a directed cycle~$D$.
  Given that all strongly connected components of~$G_R$ have been
  contracted, it implies that~$(V(\tilde{G}),\tilde{R})$ is a DAG.
  Therefore,~$D$ contains at least one edge from~$C$.
  Since~$e$ is critical in ~$G^*$, there exists a \mincut $(A,B)$ in
  $G^*$ containing~$e$.  By \Cref{obs:backedgenoflow},~$e$ is an edge
  from~$A$ to~$B$.
  Since $D$ is a cycle, there exists an edge $f\in D$ from $B$ to $A$.
  By \Cref{obs:backedgenoflow}, $f \notin C$ and thus $f\in\tilde{R}$.
  Thus, the edge $\tilde{f}$ corresponding to~$f$ in~$G^*$ is a residual
  edge from~$A$ to~$B$.

  Therefore $\tilde{f}$ belongs to $(A,B)$, contradicting the fact that
  $\tilde{f}$ is a residual edge.
\end{proof}

A \emph{topological ordering} of a directed %
graph $G $ is an ordering of the vertices of $G$, $v_1, \dots, v_n$,
such that for every edge $v_iv_j$, $i<j$.  It is well-known that $G$
admits a topological ordering if and only if $G$ is a
DAG~\cite{BJensenG09}.  Using \Cref{lemma:tildeGdag}, we can prove the
following.

\begin{restatable}{lemma}{lemmaordering}
\label{lem:ordering}
Let $(A,B)$ be a \mincut in $G^*$. There is an ordering $s=v_1,v_2, \dots, v_n=t$ of $V(G^*)$ such that the following hold:
\begin{enumerate}
    \item For each $1 \le i \le n$, $(\{v_1, \dots, v_i\}, \{{v_{i+1}},\dots, v_n\})$ is a \mincut.
    \item There exists $1\le j\le n$ such that $A=\{v_1, \dots, v_j\}$ and $B=\{{v_{j+1}},\dots, v_n\}$.
\end{enumerate}
\end{restatable}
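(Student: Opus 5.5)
The plan is to show that the $s$-sides of \mincuts of $G^*$ are exactly the \emph{ideals} of the DAG $\tilde G$ that contain $s$ and avoid $t$, and then to read the ordering off a topological ordering. Here a set $X\subseteq V(G^*)$ is an ideal if no edge of $\tilde G$ goes from $V\setminus X$ into $X$. Both directions of this characterization are short.

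\emph{Every ideal is a \mincut.} Let $X$ be an ideal with $s\in X$, $t\notin X$, and put $Y=V\setminus X$. A residual edge of $G^*$ from $X$ to $Y$ would become an edge of $\tilde G$ from $Y$ to $X$, which an ideal forbids. Hence $E_{G^*}(X,Y)$ consists only of critical edges. By \Cref{obs:saturatedcritical}, each of them is saturated by the maximum flow $f^*$; here I use that $f^*$ is maximum in $G^*$, which is shown in the proof of \Cref{lem:mincutsaftercontraction}. Similarly, a critical edge from $Y$ to $X$ is itself an edge of $\tilde G$ from $Y$ to $X$, so $E_{G^*}(Y,X)$ consists only of residual edges, and these carry zero flow by \Cref{remark:zeroflowr}. \Cref{lem:flowvaluelem} then gives $v(f^*)=c^*(X,Y)$, so $(X,Y)$ is a \mincut.

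\emph{Every \mincut is an ideal.} Conversely, let $(A,B)$ be a \mincut. A critical edge from $B$ to $A$ would be saturated with capacity at least $1$, contradicting \Cref{obs:backedgenoflow}. A reversed residual edge of $\tilde G$ from $B$ to $A$ corresponds to a residual edge of $G^*$ from $A$ to $B$. Such an edge would lie in the \mincut $E(A,B)$ and hence be critical, a contradiction. So $A$ is an ideal.

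\emph{Building the ordering.} By \Cref{lemma:tildeGdag}, both $\tilde G[A]$ and $\tilde G[B]$ are DAGs. Take a topological ordering of $\tilde G[A]$ followed by one of $\tilde G[B]$. This is a topological ordering of $\tilde G$, because no edge goes from $B$ back to $A$. Every prefix of a topological ordering is an ideal, and $A$ is itself a prefix, which gives item~2. Item~1 then follows from the first direction, as long as each prefix contains $s$ and not $t$.

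\emph{Main obstacle: placing $s$ first and $t$ last.} The delicate step is forcing $v_1=s$ and $v_n=t$, that is, showing that $s$ has no in-edges and $t$ no out-edges in $\tilde G$. In-edges of $s$ in $\tilde G$ can only be reversed residual edges $(s,x)$ of $G^*$, because $s$ is a source in $G^*$; symmetrically, out-edges of $t$ can only come from residual edges $(x,t)$. To exclude these, I would first try to argue that such an $x$ must have been merged with $s$ during the construction of $G^*$, using reachability in the residual network of $f^*$ and the SCC contraction. If that fails, I would modify the statement so that the ordering begins with the set of $\tilde G$-ancestors of $s$. This set lies inside every ideal containing $s$, and the same argument applies with $i$ ranging only over prefixes that contain this set and avoid the descendants of $t$. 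That weaker form is all that the counting argument in \Cref{lem:preprocess} needs.
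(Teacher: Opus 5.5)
Apart from the endpoint step, your argument is the paper's proof. The paper also observes that $E_{\tilde G}(B,A)=\emptyset$ for a minimum cut and concatenates topological orderings of $\tilde G[A]$ and $\tilde G[B]$. It then shows that every prefix cut has capacity $v(f^*)$: its forward edges are critical, hence saturated, and its backward edges are residual, hence carry no flow by \Cref{remark:zeroflowr}. Your ideal characterization packages exactly these two observations.

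The step you flag as the main obstacle is a real gap, and the paper does not close it. It places $s$ first and $t$ last ``because $s$ is a source and $t$ is a sink''. That holds in $G^*$ but not in $\tilde G$, where every residual edge $(s,x)$ of $G^*$ becomes an edge $(x,s)$ entering $s$.

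Your plan A cannot succeed, because the lemma is false as stated. Take $V=\{s,x,t\}$ with edges $(s,t)$ and $(s,x)$, all capacities and multiplicities equal to $1$. Then $\mu=1$ and the only minimum cut is $(\{s,x\},\{t\})$. The edge $(s,x)$ is residual, carries no flow and lies on no cycle, so nothing is contracted and $G^*=G$. The only ordering with $v_1=s$ and $v_3=t$ is $s,x,t$, and its prefix $\{s\}$ gives a cut of capacity $2$. (Separately, item~1 must exclude $i=n$, where the second side is empty; your ``contains $s$ and not $t$'' condition already handles this.)

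So such an $x$ need not merge with $s$; what holds instead is that it is inert. Let $y\in\mathrm{Anc}_{\tilde G}(s)\setminus\{s\}$. Then $y$ lies on the $s$-side of every minimum cut, since an ideal containing $s$ contains its ancestors. Hence no critical edge enters $y$, so by \Cref{remark:zeroflowr} no flow enters $y$, and by conservation none leaves it. By \Cref{obs:saturatedcritical}, no critical edge leaves $y$ either. Symmetrically, no vertex of $\mathrm{Desc}_{\tilde G}(t)\setminus\{t\}$ is incident to a critical edge.

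With this fact, your plan B is a correct repair. Order $\mathrm{Anc}_{\tilde G}(s)$ first inside $A$ and $\mathrm{Desc}_{\tilde G}(t)$ last inside $B$. This is possible because the first set is closed under predecessors and lies in $A$, and the second is closed under successors and lies in $B$. Prefixes that contain the first set and avoid the second are ideals containing $s$ but not $t$, hence minimum cuts.

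However, your remark that this weaker form ``is all the counting argument needs'' relies on the inertness fact, which you did not supply. The prefixes used in \Cref{lem:manyedgesinmincut} are delimited by endpoints of edges of $S$, which are critical. By inertness, no such endpoint lies in $\mathrm{Anc}_{\tilde G}(s)\setminus\{s\}$ or $\mathrm{Desc}_{\tilde G}(t)\setminus\{t\}$, so every prefix used there is admissible.

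Alternatively, you can restore the lemma verbatim by also deleting from $G^*$ the residual edges that leave $s$ or enter $t$. Propagating the inertness argument along residual edges shows that the vertices they lead to form a closed region touching no critical edge. Hence this deletion changes no minimum-cut edge set, and afterwards $s$ is a source and $t$ a sink of $\tilde G$.
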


\begin{proof}
       Since $(A,B)$ is a \mincut, all edges in $E(A,B)$ belong to $C$ and all edges in $E(B,A)$ belong to $R$ (from Observations~\ref{obs:saturatedcritical} and~\ref{obs:backedgenoflow}). This means that in the graph $\tilde{G}$, $E(B,A) = \emptyset$. Since $\tilde G$ is a DAG by \Cref{lemma:tildeGdag} and that no edge goes
  left through a \mincut, there exist a topological ordering of $\tilde{G}[A]$
  and a topological ordering of $\tilde{G}[B]$ such that their concatenated ordering is a topological ordering $s=v_1,v_2, \dots, v_n=t$
  of $\tilde G$. Notice that
  we can assume that $s$ is the first and $t$ is the last vertex in the ordering because $s$ is a source and $t$ is a sink.
  Hence, there exists $1\le j\le n$ such that $A=\{v_1, \dots, v_j\}$ and $B=\{{v_{j+1}},\dots, v_n\}$.

Let us now prove the first point.
Let $1\le i\le n$, $A_i=\{v_1, \dots, v_i\}$ and $B_i=\{{v_{i+1}},\dots, v_n\}$.
Clearly, $s \in A_i$ and  $t \in B_i$, so $(A_i,B_i)$ is an $s$-$t$-cut.
By \Cref{lem:flowvaluelem},
$v(f) = \sum_{e \in E(A_i,B_i)} f(e) - \sum_{e \in E(B_i,A_i)} f(e)$.
From the definition of the topological ordering, any edge in $e \in E(A_i,B_i)$ belongs to $C$ and any edge in $e \in E(B_i,A_i)$ belongs to $R$ and thus has zero flow by \Cref{remark:zeroflowr}. It follows that
  \begin{align*}
    v(f) &= \sum_{e \in E(A_i,B_i)} f(e) - \sum_{e \in E(B_i,A_i)} f(e)\\
         &= \sum_{e \in E(A_i,B_i)} f(e) - 0\\
         &= \sum_{e \in E(A_i,B_i)} c(e)
           = c(A_i,B_i),
  \end{align*}

  so $\cut(A_i,B_i)$ is an $s$-$t$-cut whose capacity is equal to the
  maximum flow, and thus it is a \mincut.
\end{proof}

\subsection{Correctness of the Preprocessing}

Using \Cref{lem:ordering}, we can prove that if $G^*$ admits a minimum cut with many edges, then $G^*$ is a yes-instance.

\begin{restatable}{lemma}{manyedgesinmincut}
\label{lem:manyedgesinmincut}
Let $(\flownet,d,k)$ be an instance of \minkdmc and   $(\flownet^*,d,k)$ be the equivalent instance constructed in \Cref{sec:G*}.
If $G^*$ has a \mincut with at least $(kd)^2$ edges, then $(\flownet^*,d,k)$ %
is a \emph{yes}-instance of \minkdmc.
Moreover, we can output $k$ $d$-diverse \mincuts in polynomial time.
\end{restatable}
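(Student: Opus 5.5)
The plan is to take a \mincut $(A,B)$ of $G^*$ with at least $(kd)^2$ edges and build $k$ nested ``prefix'' cuts from the ordering of \Cref{lem:ordering}. Such a cut can be found in polynomial time: \Cref{prop:findingmaxedgemincut} returns a \mincut with the maximum number of edges, and if that maximum is below $(kd)^2$ there is nothing to prove. Apply \Cref{lem:ordering} to $(A,B)$. This gives an ordering $s=v_1,\dots,v_n=t$ and an index $j$ with $A=\{v_1,\dots,v_j\}$, such that every prefix $(\{v_1,\dots,v_i\},\{v_{i+1},\dots,v_n\})$ is a \mincut. The ordering is computable in polynomial time, since it is just a concatenation of topological orderings of $\tilde G[A]$ and $\tilde G[B]$.

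Let $X\subseteq A$ be the set of tails and $Y\subseteq B$ the set of heads of the edges in $E(A,B)$. Since $G^*$ has no parallel edges, $|E(A,B)|\le |X|\cdot|Y|$. So if $|X|<kd$ and $|Y|<kd$, the cut would have fewer than $(kd)^2$ edges. Hence $|X|\ge kd$ or $|Y|\ge kd$.

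Case $|X|\ge kd$. List the tails in the order of the ordering, $x_1,\dots,x_m$ with $m\ge kd$, and let $p_r$ be the position of $x_r$. For $t=1,\dots,k$ set $A_t=\{v_1,\dots,v_{p_{td}}\}$ and $B_t=V(G^*)\setminus A_t$. By \Cref{lem:ordering}, each $(A_t,B_t)$ is a \mincut. Moreover $p_{td}\le p_m\le j$, so $A_t\subseteq A$ and $B\subseteq B_t$.

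Now fix $t<t'$ and a tail $x_r$ with $td<r\le t'd$. Pick an edge $(x_r,y)\in E(A,B)$. Then $x_r\in A_{t'}\setminus A_t$ and $y\in B\subseteq B_{t'}$, so $(x_r,y)$ lies in the cut $E(A_{t'},B_{t'})$ but not in $E(A_t,B_t)$. Distinct tails give distinct edges, and every multiplicity is at least $1$. So the $d$ tails $x_{td+1},\dots,x_{(t+1)d}$ already give $\ell(E(A_t,B_t)\triangle E(A_{t'},B_{t'}))\ge d$.

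Case $|Y|\ge kd$. This is symmetric, using suffixes instead of prefixes. List the heads $y_1,\dots,y_m$ in reverse order of the ordering, so $y_1$ is the last head. Take $B_t$ to be the suffix starting at $y_{td}$, i.e.\ $A_t=\{v_1,\dots,v_{q_t-1}\}$ where $q_t$ is the position of $y_{td}$. Again $q_t-1\ge j$, so $A\subseteq A_t$, and by \Cref{lem:ordering} each $(A_t,B_t)$ is a \mincut. For $t<t'$, each head $y_r$ with $td<r\le t'd$ lies in $A_t\setminus A_{t'}$, and its edge $(x,y_r)$ with $x\in A\subseteq A_{t'}$ lies in the cut $E(A_{t'},B_{t'})$ but not in $E(A_t,B_t)$. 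This gives $d$ distinct edges in the symmetric difference.

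All steps are polynomial: one max-flow-type computation, a topological sort, and reading off $k$ prefixes or suffixes. The only delicate point, which I expect to be the main obstacle, is checking that the chosen prefixes stay inside $A$ (respectively the suffixes inside $B$). This is what guarantees that the witnessing cut edges really cross the larger cut. It is also the reason the construction anchors at the tails and heads of the single large cut $(A,B)$, rather than at arbitrary positions in the ordering.
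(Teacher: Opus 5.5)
Your proof is correct and follows the paper's argument. Both take the ordering from \Cref{lem:ordering}, use the absence of parallel edges to get many tails (or heads) of cut edges on one side, and place prefix cuts at every $d$-th such endpoint, so that any two chosen cuts differ in at least $d$ distinct edges of the large cut. The differences are cosmetic: you use the threshold $kd$ where the paper uses $d(k-1)+1$, and you write out the head case with suffix cuts where the paper says ``without loss of generality''.
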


\begin{proof}
Let $(A,B)$ be a \mincut of $G^*$ with at least $(kd)^2$ edges and let $S=E_{G^*}(A,B)$.
By Lemma~\ref{lem:ordering}, there exist an ordering $s=v_1, \dots, v_n=t$ of $V(G^*)$ and an index $1\le i\le n$ such that $A=\{v_1,\dots,v_i\}$ and $B=\{v_{i+1},\dots,v_n\}$.

Since $ G^*$ has no multiple edges,
  either the left side $\{v_1,\dots,v_i\}$ contains at least $d(k-1)+1$
  endpoints of edges in~$S$, or the right side $\{v_{i+1},\dots,v_n\}$
  does (indeed, if each side has at most $d(k-1)< kd$ endpoints, then we have $|S| < (dk)^2$).  Assume w.l.o.g.\ the left side contains at least $d(k-1)+1$ endpoints.  Let
  $L$ be any set of $d(k-1)+1$ endpoints of $S$ on the left side.

  Define indices $i_1,\dots,i_k$ so that $i_1<\dots <i_k$, $i_1$ is the minimum index of the vertices in $L$, and for each $j\in[2,k]$, $v_{i_j}$ is the endpoint in $L$ such that
  exactly $d-1$ vertices of $L$ are between $v_{i_{j-1}}$ and $v_{i_j}$ in the ordering. Since $|L|\geq d(k-1)+1$, it implies that $i_k\leq i$. For each $j\in[k]$, we set
  $A_j=\{v_1,\ldots,v_{i_j}\}$ and $B_j=\{v_{i_j+1},\ldots,v_n\}$ (see~\Cref{fig:kdsquaredyes_2}).
  By \Cref{lem:ordering}, each $(A_j,B_j)$ is a \mincut.
  For every $j,j'\in[k]$ such that $j<j'$, we have that $E_{G^*}(A_{j'},B_{j'})$ contains at least $d$  edges of $S$ that are not in $E_{G^*}(A_j,B_j)$. Thus, for $C_j=E_{G^*}(A_j,B_j)$ and $C_{j'}=E_{G^*}(A_{j'},B_{j'})$,
  $|C_{j'}\triangle C_j|\geq d$. This means that $C_1,\ldots,C_k$ are $d$-diverse.

  Therefore, if $|S|\ge(dk)^2$, the instance is a \emph{yes}-instance.
\end{proof}

We are finally ready to prove the main preprocessing lemma.

\begin{proof}[Proof of \Cref{lem:preprocess}]
We find a \maxflow in~$G$ in polynomial time.
Then by making use \Cref{obs:saturatedcritical}, we check for every edge $e$, whether it is critical by calling the flow algorithm for $G-e$. This way, we construct the sets $C$ and $R$ in $\cO(m^{2 + o(1)})$ time.
In the next step, we contract the residual edges that have positive flow through them.
We then compute in linear time the strongly connected components of $G_R'$~\cite{tarjan1972depth}, and contract them.
This way, in time $\cO(m^{2 + o(1)})$, we construct the instance $(\flownet^*,d,k)$ as described in \Cref{sec:G*}.
By \Cref{lem:mincutsaftercontraction}, $(\flownet,d,k)$ and $(\flownet^*,d,k)$ are equivalent instances of \minkdmc.
In time $\cO(m^{1 + o(1)})$, we apply \Cref{prop:findingmaxedgemincut} with input $\flownet^*$.
It outputs a \mincut $(A,B)$ of $G^*$ whose number of edges is maximized.
If $(A,B)$ has at least $(kd)^2$ edges, then, by \Cref{lem:manyedgesinmincut}, $(\flownet^*,d,k)$, and thus $(\flownet,d,k)$, is a yes-instance of \minkdmc.
Otherwise, we return $(\flownet^*,d,k)$, given that any \mincut of $G^*$ has at most $(kd)^2$ edges.
\end{proof}

\section{Algorithm for Solving Bounded Cuts}

In this section, we restate and prove \Cref{lem:mainstep}.

\begin{lemma}
\label{lem:mainstep}
There is an algorithm that solves \minkdmc in time $2^{k^2r}\cdot(dkr)^{\cO(kr)}\cdot r^{\cO(r^4)}\cdot m n \log m$, where $r$ is the maximum number of edges of a \mincut of the input graph.
\end{lemma}

\subsection{Color Coding}

We now describe the color-coding scheme that will allow us to isolate the
relevant cut structure.
Let us begin with some definitions.
Let $(\flownet,d,k)$ be a yes-instance of \minkdmc and let $\mathcal{C} = (C_1,\dots,C_k)$ be $d$-diverse \mincuts in $G$.
Let $U = \cutproduct{C}$.

\smallskip
\noindent
\textbf{Membership encoding.}
Each edge $e\in E$ has a membership vector
\[
\beta(e)\in\{0,1\}^k, \qquad
\beta(e)_i = 1 \iff e\in C_i.
\]
The set of edges of $U$ corresponding to $\alpha\in\{0,1\}^k$ is defined by
\[
  C^{\cap}_\alpha := \{e \in U : \beta(e) = \alpha\}.
\]

\smallskip
\noindent
\textbf{Active regions.}
We define the set of \emph{active regions} of $\mathcal{C}$ by
\[
\mathcal{R}_{\mathrm{act}}
:=\{\alpha\in\{0,1\}^k : C^{\cap}_\alpha\neq\emptyset\}.
\]
That is, $\alpha\in\mathcal{R}_{\mathrm{act}}$ if and only if region $C^{\cap}_\alpha$ contains at least one edge
that belongs to at least one of the cuts in $\mathcal{C}$.

\begin{observation}
\label{obv:active-regions}
Assume every \mincut $C$ in $G$ satisfies $|C|\le r$.
Then the family $\mathcal{C}= (C_1,\dots,C_k)$ has at most $kr$ active regions:
\[
|\mathcal{R}_{\mathrm{act}}|\le kr.
\]
\end{observation}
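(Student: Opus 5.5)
Every active region $\alpha\in\ract$ is nonzero, since $C^{\cap}_\alpha$ is defined as a subset of $U$ and every edge of $U$ lies in at least one cut. So $\ract$ is the image of the map $\beta$ restricted to $U$, and my plan is to bound this image by bounding $|U|$.

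First, the regions $C^{\cap}_\alpha$ for distinct $\alpha$ are pairwise disjoint, because each edge has a single membership vector $\beta(e)$. Nonempty regions therefore inject into $U$ by choosing one edge from each, which gives $|\ract|\le |U|$.

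Second, $U=\cutproduct{C}$ is a union of $k$ minimum cuts, each with at most $r$ edges by hypothesis. Hence $|U|\le\sum_{i=1}^k|C_i|\le kr$.

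Combining the two steps yields $|\ract|\le kr$. An equivalent route is to observe that $\ract=\bigcup_i\{\beta(e): e\in C_i\}$, where each set in the union has at most $r$ elements. I see no real obstacle; the only point to check is that sets, not multiplicities, are being counted. The bound $|C_i|\le r$ concerns the number of edges of $G$, and this matches the definition of $U$ as a set of edges, so multiplicities $\ell$ play no role here.
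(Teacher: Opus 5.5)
Your proof is correct and is essentially the paper's argument. The paper states this as an observation with no formal proof, and its overview gives the same reasoning: the nonempty classes $C^{\cap}_\alpha$ partition $U=\cutproduct{C}$, and $|U|\le kr$ because each $C_i$ is a minimum cut with at most $r$ edges, so $|\ract|\le kr$.
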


\noindent
Note that the number of choices for the set of active regions is thus at most $(2^k)^{kr}=2^{k^2r}$.

\smallskip
\noindent
\textbf{Colorings.}
We use color coding with $kr$ colors, identified with vectors $\alpha\in\mathcal{R}_{\mathrm{act}}$.
A coloring of $G$ is a function
$\chi:E\to\mathcal{R}_{\mathrm{act}}$.
For each $\alpha\in\mathcal{R}_{\mathrm{act}}$, the coloring induces the region (color class)
$
E_\alpha := \{e\in E : \chi(e)=\alpha\}$.
A coloring~$\chi$ is \emph{good with respect to~$\mathcal{C}$} if it
agrees with the membership vectors of all edges in~$U$, i.e.,
for each $e \in U$, we have that $\chi(e) = \beta(e)$, or, said another way,
for every $\alpha \in \{0,1\}^k$, we have $C^{\cap}_\alpha =E_\alpha\cap U$.

\smallskip
\noindent
\textbf{Number of edges and multiplicity of each region.}
For each active region $\alpha$, we set
\[x_\alpha = \sum_{e \in C^{\cap}_\alpha}\ell(e) \qquad \text{and\quad if $x_\alpha \ge d$, we set $x_\alpha=d^+$ instead}
\]
So we have $x_\alpha \in \{0,1, \dots, d-1, d^+\}=:[0,d^+]$.
Similarly, we set
$r_\alpha$ to be the number of edges in a region $C^{\cap}_\alpha$. Since the number of edges in each cut is bounded by $r$, $r_\alpha \in [0,r]$.
We call \emph{profile of $\mathcal{C}$} the collection
$\mathcal{P} := \{(x_\alpha, r_\alpha) : \alpha\in\mathcal{R}_{\mathrm{act}}\}$.

More generally, we call \emph{profile with respect to $\mathcal{R}_{\mathrm{act}}$} any collection
$\mathcal{P} := \{(l_\alpha, n_\alpha) : \alpha\in\mathcal{R}_{\mathrm{act}},l_\alpha\in[0,d^+],n_\alpha\in[0,r]\}$.
Note that the number of profiles with respect to $\mathcal{R}_{\mathrm{act}}$ is thus at most $((r+1)(d+1))^{kr}$.

For convenience, we set $x_\alpha= r_\alpha=0$ and $E_\alpha=\emptyset$ for each $\alpha\in\{0,1\}^k\setminus \mathcal{R}_{\mathrm{act}}$, and we call \emph{profile} any profile with respect to some set of active regions $\mathcal{R}_{\mathrm{act}}$.

\medskip
Note that by the $d$-diversity of  $\mathcal{C}$, the multiplicity summation across regions where exactly either $C_1$ or $C_2$ intersects (for any two cuts $C_i,C_j \in \mathcal{C}$) takes value at least $d$. Formally, we have the following:
\begin{restatable}{observation}{obsdiverse}
  \label{obs:diverse}
  For each $i,j\in[k]$ such that $i \neq j$, $\sum_{\alpha: \alpha_i\neq \alpha_j} x_\alpha\ge d$.
\end{restatable}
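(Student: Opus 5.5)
The plan is to decompose the symmetric difference $C_i\triangle C_j$ along the regions $C^{\cap}_\alpha$ and then check that the truncation to $d^+$ does not break the inequality.

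First, the regions $\{C^{\cap}_\alpha : \alpha\in\{0,1\}^k\setminus\{0^k\}\}$ partition $U$, because every edge of $U$ has exactly one membership vector $\beta(e)$, and that vector is nonzero. An edge $e\in U$ lies in $C_i\triangle C_j$ if and only if exactly one of $\beta(e)_i,\beta(e)_j$ equals $1$, that is, $\beta(e)_i\neq\beta(e)_j$. Also $C_i\triangle C_j\subseteq U$. Hence
\[
\ell(C_i\triangle C_j)=\sum_{\alpha:\alpha_i\neq\alpha_j}\ \sum_{e\in C^{\cap}_\alpha}\ell(e).
\]
Regions that are not active contribute $0$, which matches the convention $x_\alpha=0$. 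By $d$-diversity of $\mathcal{C}$, the left-hand side is at least $d$.

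Second, handle the truncation. If some $\alpha$ with $\alpha_i\neq\alpha_j$ has $\sum_{e\in C^{\cap}_\alpha}\ell(e)\ge d$, then $x_\alpha=d^+$. We interpret $d^+$ as a value at least $d$ (say $d$ itself), so the sum is at least $d$, because all other terms are nonnegative. Otherwise no truncation occurs on these regions, every $x_\alpha$ equals the true multiplicity sum, and the displayed identity gives the bound directly.

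The only point needing care is this interpretation of $d^+$ as a numeric value at least $d$ in the sum. Everything else is a direct unfolding of the definitions.
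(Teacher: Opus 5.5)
Your proof is correct and takes essentially the same approach as the paper, which justifies the observation in a single sentence: $C_i\triangle C_j$ is exactly the union of the regions $C^{\cap}_\alpha$ with $\alpha_i\neq\alpha_j$, so $d$-diversity gives the bound. Your explicit treatment of the $d^+$ truncation, reading $d^+$ as ``at least $d$'', is a detail the paper leaves implicit, and it is consistent with how $d^+$ is used later in the proof of \Cref{lem:diverse'}.
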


The main result from this section and
this part of the algorithm is the following lemma on deterministic guessing. We apply a deterministic version of color coding as follows to guess a coloring that is good with respect to $\mathcal{C}$ and its profile.

\begin{restatable}{lemma}{profileguessdeterministic}
  \label{lem:profileguessdeterministic}
  Let $(\flownet,d,k)$ be an instance of \minkdmc, let $\ract$ be a set of
  active regions and let $\cal P$ be a profile with respect to $\ract$.
  Then one can construct in time $(kr)^{\cO(kr)}m\log m$ a set $\mathcal{H}$ of
  colorings of $G$ of size at most $(kr)^{\cO(kr)}\log m$ such that, for every
  $d$-diverse \mincuts $\mathcal{C}= (C_1,\dots,C_k)$ in $G$ with profile
  $\cal P$, there exists a coloring $\chi\in \mathcal{H}$ that is good with
  respect to $\mathcal{C}$.
\end{restatable}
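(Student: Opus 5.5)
The plan is to reduce to the standard derandomization of color coding via perfect hash families combined with universal/splitter-type constructions. The key point is that a good coloring only needs to be correct on $U=\cutproduct{C}$. By \Cref{obv:active-regions}, $|U|\le kr$; more precisely $|U|=\sum_{\alpha\in\ract} r_\alpha$, and this sum is determined by the profile $\mathcal P$. Edges outside $U$ may receive arbitrary colors. So we need a family $\mathcal H$ of functions $E\to\ract$ such that for every set $U\subseteq E$ of size $q:=\sum_\alpha r_\alpha\le kr$ and every target map $\beta|_U:U\to\ract$, some $\chi\in\mathcal H$ agrees with $\beta$ on $U$. Such a family is exactly a $(m,q,|\ract|)$-universal coloring family.

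The construction proceeds in two stages. First, I take an $(m,q)$-perfect hash family $\mathcal F$ of functions $E\to[q]$, in the sense of Naor--Schulman--Srinivasan as presented in Cygan et al.~\cite{cygan2015parameterized}. It has size $e^{q}q^{\cO(\log q)}\log m$ and is computable in time $e^q q^{\cO(\log q)} m\log m$. For every $q$-subset $U$, some $h\in\mathcal F$ is injective on $U$. Second, I enumerate all maps $g:[q]\to\ract$; there are at most $|\ract|^{q}\le (kr)^{kr}$ of them. I then set $\mathcal H=\{g\circ h : h\in\mathcal F,\ g:[q]\to\ract\}$. For a given solution $\mathcal C$ with profile $\mathcal P$, choose $h$ injective on $U$. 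Then define $g(h(e))=\beta(e)$ for $e\in U$, which is well defined by injectivity, and extend $g$ arbitrarily elsewhere. The composed coloring $\chi=g\circ h$ satisfies $\chi(e)=\beta(e)$ for all $e\in U$, and hence $C^{\cap}_\alpha=E_\alpha\cap U$, so $\chi$ is good. Note that $\beta(e)\in\ract$ for $e\in U$ by the definition of active regions, so the codomain is correct.

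The counting step is $|\mathcal H|\le (kr)^{kr}\cdot e^{kr}(kr)^{\cO(\log kr)}\log m=(kr)^{\cO(kr)}\log m$. Building $\mathcal F$ and evaluating each of the $|\mathcal H|$ colorings on all $m$ edges costs $(kr)^{\cO(kr)}m\log m$ time.

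The only delicate point is making sure the family works simultaneously for all solutions with profile $\mathcal P$. This holds because the construction depends only on $m$, $q$ and $\ract$, not on $U$. The profile is used only to fix $q$; alternatively one can take $q=kr$ and pad $U$ with arbitrary extra edges, since injectivity on a superset suffices. I expect no real obstacle beyond citing the perfect hash family construction with the right parameters. One edge case is $m<q$, where the hash family can simply be the identity.
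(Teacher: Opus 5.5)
Your proposal is correct and matches the paper's proof: both compose a perfect hash family $\mathcal{F}$ (of functions $E\to[q]$ with $q\le kr$) with all maps $g\colon[q]\to\ract$, and obtain a good coloring by taking $h\in\mathcal{F}$ injective on $U$ and defining $g$ on $h(U)$ via $\beta$. Your write-up is slightly more careful than the paper's in two places: you only require injectivity on $U$, whereas the paper says ``bijective'', which needs $|U|=kr$. You also give the hash family's size correctly as $e^{q}q^{\cO(\log q)}\log m$, while the paper has a spurious factor $m$ there.
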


\begin{proof}
     We apply a deterministic version of color coding using standard splitter constructions; we refer to the textbook on parameterized algorithms~\cite{cygan2015parameterized} for the introduction.
In particular, by known results on $(n,s)$-perfect hash families, one can find in time $e^s \cdot s^{\cO(\log s)} m\log m$ an explicit family $\mathcal F$ of functions $\lambda:E\to[s]$ of size
$e^s \cdot s^{\cO(\log s)} m\log m$ such that for every subset $D\subseteq E$ of size at most $s$, there exists some $\lambda\in\mathcal F$ such that $\lambda$ is bijective on $D$.
In particular, for $s=kr$, this is the case for $D=\bigcup_{i\in[k]}C_i$ for any \mincuts $(C_1,\dots,C_k)$.
Let $\mathcal{G}$ represent the set of all functions from $[s]$ to $\ract$ and let $\mathcal{H} :=\mathcal{G} \circ \mathcal{F} = \{ g \circ f \mid g \in \mathcal{G},\ f \in \mathcal{F} \}$.
We claim that the statement holds with the family $\mathcal{H}$.
Let $\mathcal{C}= (C_1,\dots,C_k)$ be $d$-diverse \mincuts in $G$ with profile $\cal P$.
Since at most $kr$ edges belong to $U:=\bigcup_{i\in[k]}C_i$, there is a function $f\in\cal F$ that is bijective from $U$ to $[kr]$ and furthermore, there exists a function $g \in \mathcal{G}$ that labels these edges to the correct active region in $\ract$.
Hence $\chi=g \circ f$ is a coloring $\chi\in \mathcal{H}$ that is good with respect to $\mathcal{C}$.

The size of $\mathcal{H}$ is bounded by $(kr)^{kr} \cdot e^{kr} \cdot (kr)^{\cO(\log kr)} \log m$ and takes time $(kr)^{kr} \cdot e^{kr} \cdot (kr)^{\cO(\log kr)} \log m$ to compute. Upon simplification, we obtain $|\mathcal{H}|=(kr)^{\cO{(kr)}}\log m$, and $\mathcal{H}$ is computed in time $(kr)^{\cO{(kr)}}m\log m$.

\end{proof}

\subsection{Flow Augmentation}

The \emph{flow augmentation} technique designed by Kim et al.~\cite{kim2025flow} transforms minimal cuts into minimum cuts. The blackbox result (in its deterministic version) is as follows:

\begin{theorem}[{Flow Augmentation~\cite{kim2025flow}}]
\label{thm:flowaugdeterm}
There exists an algorithm that, given a directed graph $G$, two vertices $s,t\in V(G)$,
and an integer $\lambda$, in time $2^{\cO(\lambda^4\log \lambda)}n^{\cO(1)}$ outputs a set
$\mathcal{A}\subseteq 2^{V(G)\times V(G)}$ of size $2^{\cO(\lambda^4\log \lambda)}(\log n)^{\cO(\lambda^3)}$
such that for every minimal $s$--$t$ cut $Z \subseteq E(G)$ of size at most $\lambda$ there exists
$A\in\mathcal{A}$ such that $Z$ remains an $s$--$t$ cut in $G + A$ and, furthermore, $Z$
is a \mincut in $G + A$.
\end{theorem}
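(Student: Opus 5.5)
Since \Cref{thm:flowaugdeterm} is the black-box result of Kim et al.~\cite{kim2025flow}, the paper will simply invoke it. If I had to reprove it, I would follow this plan.

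\textbf{Safe edges.} Fix a minimal $s$--$t$ cut $Z$ with $|Z|\le\lambda$. Let $S$ be the set of vertices reachable from $s$ in $G-Z$ and $T=V(G)\setminus S$. By minimality, every edge of $Z$ goes from $S$ to $T$, and every vertex of $T$ touched by $Z$ reaches $t$ in $G-Z$. An added pair $(u,v)$ keeps $Z$ an $s$--$t$ cut exactly when it does not go from $S$ to $T$. Hence edges of the form $(v,s)$, $(t,v)$ and $(u,v)$ with $v\in S$, or with $u,v\in T$, are always safe. I would add them with multiplicity $\lambda+1$, so that they act as infinite-capacity edges. It then suffices to make $\lambda_{G+A}(s,t)=|Z|$, since $Z$ already upper-bounds the cut value in $G+A$.

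\textbf{Iterative augmentation.} I would increase the max-flow value one unit at a time, in at most $\lambda$ rounds. In each round, take the current graph $G'$ (already augmented) with $\lambda_{G'}(s,t)<|Z|$, and compute a maximum flow together with its sequence of closest minimum cuts. Since $Z$ is not a minimum cut of $G'$, some minimum cut $Z'$ of $G'$ differs from $Z$. I would guess how $Z$ sits relative to the flow paths: on each of the at most $\lambda$ paths, which edge is cut by $Z$ (or a coarse interval containing it), and which vertices around the minimum cuts of $G'$ lie in $S$ or in $T$. From this guess I would add edges $(s,v)$ for vertices $v$ guessed to be in $S$ and $(v,t)$ for vertices $v$ guessed to be in $T$. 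The goal is to show that a correct guess creates a new augmenting path that avoids $Z$, so the flow increases while $Z$ stays a cut. To keep the guess cheap, I would reduce it to a random choice: sample vertices independently, with bias depending only on $\lambda$, and argue a success probability of $2^{-\mathrm{poly}(\lambda)}$ per round. I would then derandomize with splitters / $(n,s)$-perfect hash families, exactly as in \Cref{lem:profileguessdeterministic}.

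\textbf{Main obstacle.} I expect the hard step to be the per-round success bound. The set of vertices whose side must be guessed correctly is not bounded in $\lambda$: the parts of $S$ and $T$ between consecutive closest minimum cuts can be huge. A naive guess therefore fails with probability close to $1$. I would first try a recursive decomposition. Partition the flow paths into ``bundles'' between consecutive minimum cuts of $G'$, and guess, per bundle, only the scale (a power of two) of the distance along the path at which $Z$ cuts it. Then recurse inside each bundle with a smaller parameter, so that the number of vertices that must be guessed correctly stays bounded in $\lambda$. The guessing of scales is the source of the $(\log n)^{\mathcal{O}(\lambda^3)}$ factor, and bounding the recursion depth and branching by $\mathrm{poly}(\lambda)$ gives the $2^{\mathcal{O}(\lambda^4\log\lambda)}$ term. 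Making this recursion both sound (every added edge respects $Z$) and productive (each level increases the flow or strictly shrinks the parameter) is where I expect essentially all of the technical difficulty of~\cite{kim2025flow} to lie.
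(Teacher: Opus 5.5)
Your reading matches the paper: \Cref{thm:flowaugdeterm} is taken as a black box from Kim et al.~\cite{kim2025flow}, and the paper gives no proof of it, so the citation is the entire justification needed.

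The reconstruction you sketch is not a proof and should be presented only as intuition. Its decisive step is left open: a $2^{-\mathrm{poly}(\lambda)}$ per-round success bound, obtained via a recursion that keeps the number of vertices whose side must be guessed bounded in $\lambda$. That step is exactly the content of~\cite{kim2025flow}.

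One smaller point: your ``exactly when'' criterion for safe pairs holds only in the ``if'' direction. A pair from $S$ into a vertex of $T$ that cannot reach $t$ in $G-Z$ also keeps $Z$ a cut. This is harmless, because only sufficiency is used.
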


Note that, given that $\lambda^3 \log\log n\le\lambda^4$ if $\log\log n\le\lambda$ and $\lambda^3 \log\log n\le(\log\log n)^4$ otherwise, we have

\begin{align*}
|\mathcal{A}|=
2^{\cO(\lambda^4 \log \lambda)}(\log n)^{\cO(\lambda^3)}
&=
2^{\cO(\lambda^4 \log \lambda)} \cdot 2^{\cO(\lambda^3 \log\log n)}\\
&\leq
2^{\cO(\lambda^4 \log \lambda)} \cdot 2^{\cO(\lambda^4 + (\log\log n)^4)}
=
2^{\cO(\lambda^4 \log \lambda)}\, n^{o(1)}.
\end{align*}

We use   flow augmentation   to find minimum cuts with the required colors and specifications.

\begin{lemma}\label{lem:colored_cuts}
There is an algorithm that, given a directed graph $G$, an edge capacity $c$, a coloring $\chi$ of $G$, a profile $\mathcal{P} := \{(x_\alpha, r_\alpha) : \alpha\in\{0,1\}^k\}$, and an integer $r$,
find in time $k^{r+1}r^{\cO(r^4)}d^rmn$, if they exist, for each $i \in [k]$, a minimum cut $C_i'$ such that,
\begin{itemize}
    \item $C_i'\subseteq \bigcup_{\alpha:\alpha_i=1}E_\alpha$ and $\lambda_i =|C_i'|\le r$, and
    \item for each active $\alpha$, $\sum_{e \in C_i' \cap E_\alpha }\ell(e) = x_\alpha$ and $|C_i'\cap E_\alpha|=r_\alpha.$
\end{itemize}
\end{lemma}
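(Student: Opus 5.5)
We handle each $i\in[k]$ separately; the $k$ searches are independent. Fix $i$. A minimum cut $C_i'=E(A,B)$ is, once parallel-edge/irrelevant-edge issues are set aside, an inclusion-minimal $s$--$t$ cut in the underlying uncapacitated graph $G$. It has at most $r$ edges. So we guess $\lambda_i\in[r]$ and apply \Cref{thm:flowaugdeterm} to $G$ (forgetting $c$ and $\ell$) with $\lambda=\lambda_i$. This yields a family $\mathcal{A}$ of size $2^{\cO(\lambda^4\log\lambda)}n^{o(1)}$. For the hypothetical target cut $Z$, some $A\in\mathcal{A}$ makes $Z$ an $s$--$t$ cut in $G+A$ that is also a minimum-cardinality cut there. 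The added edges in $A$ get capacity $\infty$ in the capacitated problem and are never allowed in the cut; this is consistent, because $Z$ remains a cut in $G+A$.

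For each $A\in\mathcal{A}$, we check that the unit-capacity max flow of $G+A$ equals $\lambda_i$, and compute $\lambda_i$ edge-disjoint paths $P_1,\dots,P_{\lambda_i}$ by \Cref{lem:mengerdecompostion}. Every minimum-cardinality cut of $G+A$ contains exactly one edge of each path $P_j$. For each path we guess the color $\alpha^{(j)}$ of the cut edge on $P_j$, requiring $\alpha^{(j)}_i=1$, which gives at most $k$ effective choices per path. We also guess its multiplicity value in $[0,d^+]$. This accounts for the $k^{r}d^r$ factor. We keep only guesses consistent with the profile: for each active $\alpha$, the number of paths colored $\alpha$ must equal $r_\alpha$, and the sum of the guessed multiplicities, capped at $d^+$, must equal $x_\alpha$.

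Given a guess, we restrict the allowed cut edges. On $P_j$, only edges $e$ with $\chi(e)=\alpha^{(j)}$ and $\ell(e)$ matching the guess (with $\ell(e)\ge d$ for $d^+$) may be cut. All other path edges, and all non-path edges, get capacity $\infty$ in an auxiliary network $H$ on $G+A$. The reason non-path edges can be made infinite is that a minimum-cardinality cut in $G+A$ consists only of path edges.

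In $H$ we then need a cut that is simultaneously of size $\lambda_i$ in $G+A$ and of minimum $c$-capacity $\mu(G)$ in $G$. To enforce both, we set the capacity of each allowed edge to $c(e)\cdot M+1$ with $M$ large, where $M$ is larger than any attainable cardinality. Alternatively, we use the lexicographic trick of \Cref{prop:findingmaxedgemincut}. We then run one max-flow computation, costing $\cO(mn)$, and accept exactly when the value is $\mu(G)\cdot M+\lambda_i$. Such a cut uses only allowed edges and hits each $P_j$ exactly once, so it realizes all color counts and multiplicities dictated by the guess. Conversely, the true $C_i$ survives for the right choice of $A$ and guess. Since it avoids $A$ and is not cut by the infinite edges, it is a cut of exactly this value in $H$.

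The main obstacle is the interaction between minimality with respect to $c$ and minimality in cardinality after augmentation. We must argue that any cut found with the combined value is a minimum $c$-cut of $G$ itself, not only of $G+A$; this holds because the edges of $A$ are uncuttable. We must also argue that it has exactly one edge per path. Both follow from the value equality, since the $M$-scaled term forces $c$-capacity $\mu(G)$ and the remainder forces cardinality $\lambda_i$. With that settled, the running time is $r\cdot|\mathcal{A}|\cdot (kd)^{r}\cdot\cO(mn)$ per $i$, which is within $k^{r+1}r^{\cO(r^4)}d^rmn$.
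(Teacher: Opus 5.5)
Your proof is correct and follows the paper's route. Both use flow augmentation on $(G,\mathbf{1})$, justified because a minimum $c$-cut is inclusion-minimal. Both then take $\lambda_i$ edge-disjoint paths in the augmented graph, guess a (color, capped multiplicity) type for the cut edge on each path filtered against the profile, make every other edge uncuttable, and run one minimum-cut computation.

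The substantive difference is your final check, and yours is the sturdier version. The paper only verifies that the minimum cut of $(G+A_i,c_i)$ has capacity $\mu$, and then asserts that the count constraints hold. But such a cut is just a minimum cut of $(G,c)$, hence only inclusion-minimal in $G+A_i$. It may therefore take two allowed edges from the same path $P_j$, since the path can cross from the sink side back to the source side and then cross again. That would break $|C_i'\cap E_\alpha|=r_\alpha$ and the multiplicity sums. Your weights $c(e)\cdot M+1$, with acceptance exactly at value $\mu(G)\cdot M+\lambda_i$, force both $c$-capacity $\mu(G)$ and cardinality $\lambda_i$. This gives exactly one edge per path, so the guessed profile is actually realized by the output cut.

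One small slip: the number of admissible colors per path is not at most $k$ but at most $r$. Only colors $\alpha$ with $\alpha_i=1$ and $r_\alpha\ge 1$ can be used, and their $r_\alpha$ sum to $\lambda_i\le r$. So the guessing factor is $(rd)^r$ rather than $(kd)^r$. The extra $r^r$ is absorbed into $r^{\cO(r^4)}$, so your running-time bound still holds.
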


\begin{proof}
Let $\mathbf{1}:E(G)\to \{1\}$ be the all-ones capacity function.
The deterministic version of flow augmentation (Theorem~\ref{thm:flowaugdeterm}) guarantees a set $\mathcal{A}$ such that for every $C$ that is minimal in $(G,\mathbf{1})$, there exists $A \in \mathcal{A}$ such that $C$ is an \mincut in $(G+A, \mathbf{1})$. For $(G,\mathbf{1})$, use Theorem~\ref{thm:flowaugdeterm} to obtain $\mathcal{A}$.

We take a step back to see why minimal cuts come into the picture.

\begin{claim}\label{claim:minimaltominimum}
  If $C$ is a \mincut in $(G,c)$, then $C$ is a minimal $s$--$t$ cut in $(G,\mathbf{1})$.
\end{claim}
\begin{claimproof}
    A \mincut is inclusion-wise minimal. If $C$ is not minimal, let $C' \subset C$ such that $C'$ is an $s$--$t$ minimal cut in $(G,\mathbf{1})$. This means that $C'$ is a cut in $(G,c)$ such that $c(C') < c(C)$. This contradicts the fact that $C$ is an $s$--$t$ mincut.
\end{claimproof}

Onto the procedure, for each $i\in[k]$, the iteration consists of the following steps.

\smallskip
\noindent\textbf{Step 1 (The augmentation of edges for the cut).}
 Since $C_i'$ is minimal in $(G, \mathbf{1})$ by Claim~\ref{claim:minimaltominimum}, we can find $A_i \in \mathcal{A}$ such that $C_i'$ is a \mincut in $G+A_i$. This $A_i$ can be found in $2^{\cO(\lambda_i^4 \log \lambda_i)}\, n^{o(1)}$ guesses by \Cref{thm:flowaugdeterm}. From our bounds on $\lambda_i$, this is $2^{\cO(r^4 \log r)}\, n^{o(1)}$.

\smallskip
\noindent\textbf{Step 2 (Menger paths).}
 Recall that $\lambda_i$ represents the number of edges expected in $C_i'$, which is a \mincut in $(G+A_i,\mathbf{1})$. By Lemma~\ref{lem:mengerdecompostion}, we construct $P_1, \dots, P_{\lambda_i}$, the edge-disjoint $s$--$t$ paths in $G+A_i$.

\smallskip
\noindent\textbf{Step 3 (Assigning types to paths).}
A \mincut takes one edge from each $s$-$t$ path $P_j$. Each edge in $C_i'$ is painted with a color and multiplicity $(\chi(e), \ell(e))$. This means that we are required to guess the color and the multiplicity of the edge from each path $P_j$. The multiplicity can take values from $\{1, \dots, d-1,d^+\}$ and there are at most $kr$ colors to choose from.

Naturally, guesses that do not satisfy the conditions of \Cref{lem:colored_cuts} are discarded. The relevant conditions to be met on the guesses are highlighted below. For $1 \le j \le \lambda_i$,

\begin{enumerate}
    \item\label{enum:cond1} $(\alpha_j)_i = 1$ (so that $C_i'\subseteq \bigcup_{\alpha:\alpha_i=1}E_\alpha$),
    \item\label{enum:cond2} for all active $\alpha$ such that $(\alpha)_i = 1$, if $x_\alpha = d^+$ then $\sum_{j:\alpha_j = \alpha}l_j \ge d$ and otherwise $\sum_{j:\alpha_j = \alpha}l_j = x_\alpha$ (so that $\sum_{e \in C_i' \cap E_\alpha }\ell(e) = x_\alpha$), and
    \item\label{enum:cond3} For all active $\alpha$ such that $(\alpha)_i = 1$,
    $|\{j:\alpha_j = \alpha\}| = r_\alpha$ (so that $|C_i'\cap E_\alpha|=r_\alpha$).
\end{enumerate}

\smallskip
\noindent\textbf{Step 4 (Capacity constraints to compute a cut).}
We assign capacities to the graph $G+A_i$. Define $c_i(e) = c(e)$ if $e\in P_j$ for some $j$ and if $(\chi(e), \ell(e))=(\alpha_j,l_j)$. Otherwise, set $c_i(e) = +\infty$. The graph $(G+A_i, c_i)$ is the auxiliary graph.

\smallskip
\noindent\textbf{Step 5 (Reconstruction of the cuts).}
Compute a \mincut $C_i'$ in $(G+A_i,c_i)$ and check if its capacity equals $\mu$. If not, return \emph{no}-instance.
Given that $C_i'$ is a \mincut in $(G+A_i,c_i)$, it is also a \mincut in $(G,c)$ by construction.
Additionally, since bad guesses were discarded at Step~3, we know that $C_i'$ follows the requirements of the statement.

Hence, after going through Steps 1--5 for each $i\in[k]$, if $\cutenum{C'}$ are the resulting cuts then we report a \emph{yes}-instance and output these cuts. Otherwise, we report a \emph{no}-instance.

\smallskip
\noindent
\textbf{Correctness.}
Suppose that there exist $\cutenum{C}$ following the requirements of the statement.
Let us argue that the algorithm returns a yes-instance.
Consider the graph $(G,\mathbf{1})$. By Claim~\ref{claim:minimaltominimum}, each $C_i$ is a minimal cut in $(G,\mathbf{1})$.
In Step 1, by Theorem~\ref{thm:flowaugdeterm}, there exists a set $\mathcal{A}$ such that for some $A_i \in \mathcal{A}$, the cut $C_i$ is a minimum cut in $G + A_i$. The procedure enumerates all such $A_i$, hence it considers the correct one.
Step 2 constructs $P_1, \dots, P_{\lambda_i}$ edge disjoint $s$--$t$ paths. We know that $C_i$ takes exactly one edge from each path $P_j$. Let edge $e_j$ denote the edge selected from $P_j$ by $C_i$.
In Step 3, the individual color and the multiplicity $(\chi(e_j), \ell(e_j))$ is guessed for each path $P_j$ and this guess is relevant and is not discarded since conditions \ref{enum:cond1}, \ref{enum:cond2} and \ref{enum:cond3} are satisfied.
In Step 4, for a path $P_j$, only the edges with color and multiplicity $(\chi(e_j), \ell(e_j))$ have their finite capacities $c$ assigned. The rest of the edges have infinite capacities. Under this assignment, $C_i$ remains feasible.
In Step 5, we know that $C_i$ satisfies as a minimum cut for $(G+A_i,c_i)$. Indeed, a cut with lower capacity would also be a cut in $(G,c)$, a contradiction. Let $C'_i$ be the cut obtained by running the minimum cut algorithm on $(G+A_i,c_i)$.
Repeating these steps for each $i\in[k]$, we obtain $\cutenum{C'}$ and this is output by the algorithm.

\subparagraph{Running time.}
The flow augmentation requires to be run only once and computation of $\A$ takes time $2^{(r)^4\log r}n^{\cO(1)}$. For each $i \in [k]$, the running time is as follows. We know that $|\A| \le 2^{(r)^4\log r}n^{o(1)}$. We guess $A_i \in \A$, and then compute Menger paths in time $\cO(m\sqrt{n})$ (see \Cref{lem:mengerdecompostion}). The assignment of types to the paths takes at most $(kr)d$ guesses for each path and there can be at most $r$ paths in total.
This takes $\cO((krd)^{r})$ time in total. Following this, the \mincut computation for each guess can be done in time $\cO(m^{1 + o(1)})$~\cite{chen2025maximumflow}. In total, we run this procedure over $k$ iterations and end up using time
\[
  2^{\cO(r^4 \log(r))}
  \cdot k
  \cdot n
  + 2^{\cO(r^4 \log(r))}
  \cdot k
  \cdot (krd)^{r}
  \cdot O(m\sqrt n).
\]

This simplifies to $\cO(k \cdot 2^{r^4 \log(r)} \cdot \big(krd\big)^{r} \cdot m\sqrt n)$.
Upon simplifying logs and exponentials, the running time is $\cO(k^{r+1}r^{\cO(r^4)}d^r m\sqrt n)$.

\end{proof}

\begin{lemma}\label{lem:diverse'}
Let $\chi$ be a coloring of $G$ and $\mathcal{P} := \{(x_\alpha, r_\alpha) : \alpha\in\{0,1\}^k\}$ be a profile such that,
for each $i,j\in[k]$, $\sum_{\alpha: \alpha_i\neq \alpha_j} x_\alpha\ge d$.
Then the cuts $C_1',\dots,C_k'$ output by \Cref{lem:colored_cuts} are $d$-diverse.
\end{lemma}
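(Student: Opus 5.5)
The plan is to show that for any $i\neq j$, the symmetric difference $C_i'\triangle C_j'$ contains, for every $\alpha$ with $\alpha_i\neq\alpha_j$, the whole set $C_i'\cap E_\alpha$ (if $\alpha_i=1$) or $C_j'\cap E_\alpha$ (if $\alpha_j=1$). The multiplicity of these sets is prescribed by the profile via \Cref{lem:colored_cuts}, so we can then sum over such $\alpha$ and invoke the hypothesis $\sum_{\alpha:\alpha_i\neq\alpha_j}x_\alpha\ge d$.

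\textbf{Key step 1 (disjointness by colors).} Since $\chi$ is a function, the classes $E_\alpha$ partition $E$. By the first item of \Cref{lem:colored_cuts}, $C_j'\subseteq\bigcup_{\beta:\beta_j=1}E_\beta$. Fix $\alpha$ with $\alpha_i=1,\alpha_j=0$. Then $C_j'\cap E_\alpha=\emptyset$, so every edge of $C_i'\cap E_\alpha$ lies in $C_i'\setminus C_j'$. The symmetric case $\alpha_i=0,\alpha_j=1$ gives $C_j'\cap E_\alpha\subseteq C_j'\setminus C_i'$. Since the $E_\alpha$ are pairwise disjoint, these pieces are disjoint. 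Hence
\[
\ell(C_i'\triangle C_j')\ \ge\ \sum_{\alpha:\alpha_i=1,\alpha_j=0}\ell(C_i'\cap E_\alpha)+\sum_{\alpha:\alpha_i=0,\alpha_j=1}\ell(C_j'\cap E_\alpha).
\]

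\textbf{Key step 2 (profile values).} By the second item of \Cref{lem:colored_cuts}, more precisely by condition~\ref{enum:cond2} of Step~3 in its proof, for each active $\alpha$ with $\alpha_i=1$ we have $\ell(C_i'\cap E_\alpha)=x_\alpha$ when $x_\alpha<d$, and $\ell(C_i'\cap E_\alpha)\ge d$ when $x_\alpha=d^+$. The same holds for $j$. Inactive $\alpha$ have $x_\alpha=0$ and contribute nothing either way.

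\textbf{Key step 3 (conclusion).} If some $\alpha$ with $\alpha_i\neq\alpha_j$ has $x_\alpha=d^+$, the corresponding term alone is at least $d$. Otherwise every term equals $x_\alpha$, and the sum is $\sum_{\alpha:\alpha_i\neq\alpha_j}x_\alpha\ge d$ by hypothesis. In both cases $\ell(C_i'\triangle C_j')\ge d$, so the cuts are $d$-diverse.

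The only delicate point is the treatment of the capped value $d^+$. The statement of \Cref{lem:colored_cuts} literally writes equality with $x_\alpha$, so I would make explicit, by referring to condition~\ref{enum:cond2}, that $d^+$ means ``at least $d$''. With that reading, interpreting $d^+\ge d$ in the hypothesis sum, the argument is immediate.
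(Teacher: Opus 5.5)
Your proof is correct and follows the paper's argument. You lower-bound $\ell(C_i'\triangle C_j')$ by the multiplicities of $C_i'\cap E_\alpha$ and $C_j'\cap E_\alpha$ over colors with $\alpha_i\neq\alpha_j$, replace these by $x_\alpha$ via \Cref{lem:colored_cuts}, and apply the hypothesis; your explicit handling of the capped value $d^+$ (where the paper writes an equality that should read ``at least $d$'') and of the disjointness of the pieces simply makes the paper's one-line chain of inequalities rigorous.
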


\begin{proof}
Let us show that $C_i'$ and $C'_j$ are $d$-diverse for each $i\ne j$. We have:
\begin{align*}
\ell(C_i'\triangle C_j') %
&\ge \sum_{\alpha:\alpha_i=1, \alpha_j=0}\sum_{e \in C_i' \cap E_\alpha} \ell(e)
 +\sum_{\alpha:\alpha_i=0, \alpha_j=1}\sum_{e \in C_j' \cap E_\alpha}\ell(e)\\
                       &= \sum_{\alpha:\alpha_i=1, \alpha_j=0}x_\alpha + \sum_{\alpha:\alpha_i=0, \alpha_j=1}x_\alpha
                       =  \sum_{\alpha:\alpha_i\neq\alpha_j} x_\alpha \ge d
\end{align*}
\end{proof}

\subsection{Correctness of the Algorithm}

We can now prove \Cref{lem:mainstep}.

\begin{proof}[Proof of \Cref{lem:mainstep}]
Our algorithm goes as follows.
For each set of active regions $\mathcal{R}_{\mathrm{act}}$ (at most $2^{k^2r}$ such sets) and for each profile $\Pcal$ with respect to $\mathcal{R}_{\mathrm{act}}$ (at most $((r+1)(d+1)^{kr}$ such profiles), we do the following. Firstly, we only retain the profiles for which \Cref{obs:diverse} holds and discard the rest. Using Lemma~\ref{lem:profileguessdeterministic}, we construct in time $(kr)^{\cO(kr)}m\log m$ a set $\mathcal{H}$ of colorings of $G$ of size at most $(kr)^{\cO(kr)}\log m$ such that, for every $d$-diverse \mincuts $\mathcal{C}= (C_1,\dots,C_k)$ in $G$ with profile $\cal P$, there exists a coloring $\chi\in \mathcal{H}$ that is good with respect to $\mathcal{C}$.

For each coloring $\chi\in\cal H$, we invoke the procedure of Lemma~\ref{lem:colored_cuts}
to test whether there exists a family of cuts $\cutenum{C'}$ such that, for each $i\in[k]$,
\begin{itemize}
    \item $C_i'\subseteq \bigcup_{\alpha:\alpha_i=1}E_\alpha$ and $\lambda_i :=|C_i'|\le r$, and
    \item $\sum_{e \in C_i' \cap E_\alpha }\ell(e) = x_\alpha$ and $|C_i'\cap E_\alpha|=r_\alpha.$
\end{itemize}
If any call succeeds, output the corresponding cuts.
If no profile yields a solution, report that no such family of cuts exists.

\subparagraph{Correctness.}
If $(\flownet,d,k)$ is a yes-instance, then there exists $k$ $d$-diverse \mincuts $(C_1,\dots,C_k)$.
At some point in the algorithm, we consider the set of active regions $\mathcal{R}_{\mathrm{act}}$ of $(C_1,\dots,C_k)$ and the profile $\mathcal{P}= \{(x_\alpha, r_\alpha) : \alpha\in\mathcal{R}_{\mathrm{act}}\}$ of $(C_1,\dots,C_k)$.
Additionally, by Lemma~\ref{lem:profileguessdeterministic}, there is a coloring $\chi$ of $G$ in $\cal H$ that is good with respect to $(C_1,\dots,C_k)$.
Hence, by Lemma~\ref{lem:colored_cuts}, we should find $k$ cuts $\cutenum{C'}$ with the correct requirement (since $\cutenum{C}$ follows those requirements).
Since \Cref{obs:diverse} holds, for each $i\in[k]$, we have $\sum_{\alpha: \alpha_i\neq \alpha_j} x_\alpha\ge d$.
This implies, by \Cref{lem:diverse'}, that $\cutenum{C'}$ are $d$-diverse.
Hence the result.

\subparagraph{Running time.} From \Cref{lem:profileguessdeterministic} and \Cref{lem:colored_cuts}, the total running time is:
\[2^{\cO(k^2r)}\cdot((r+1)(d+1))^{kr}\cdot\left((kr)^{\cO(kr)}m\log m+(kr)^{\cO{(kr)}}\log m\cdot k^{r+1}r^{O(r^4)}d^rm \sqrt n\right)\]
which, after simplification, gives
$\cO(2^{\cO(k^2r)}\cdot(dkr)^{\cO(kr)}\cdot r^{\cO(r^4)}\cdot m \sqrt n \log n)$.

\end{proof}

\section{Conclusion}
We conclude by stating some open problems.
In our paper, we proved that the \minkdmclong problem is \FPT when parameterized by~$k$ and~$d$. However, the running time of our algorithm is superexponential. It is natural to ask whether it is possible to do better. The existing complexity lower bound of de Berg, López Martínez, and Spieksma~\cite{deberg2024findingdiverse} does not exclude existence of single-exponential in~$k$ and~$d$ algorithm. Furthermore, it is interesting to know whether we can get a substantially better running time for the problem on undirected graphs, as this case may be easier. In particular, we can contract all residual edges in the preprocessing step. Then for the obtained instance, it can be shown that either it is a yes-instance of the problem or the pathwidth of the graph does not exceed~$kd$.
Another natural question is whether \minkdmclong admits a polynomial kernel for the parameterization by~$k$ and~$d$.

Further questions concern lower bounds. For~$d\in\{1,2\}$, \minkdmclong is equivalent to deciding whether the input weighted graph has~$k$ distinct minimum~$s$--$t$ cuts. Since the problem of counting \mincuts is \sharpp-complete by the result of Provan and Ball~\cite{provan1983complexitycounting}, we obtain that \minkdmclong is intractable even if~$d\in\{1,2\}$. However, this is not the case when~$k$ is polynomial in the graph size because \mincuts can be enumerated with polynomial (linear) delay~\cite{ProvanS96,TsukiyamaSOA80}. Is \minkdmclong \NP-complete in the strong sense for a given constant~$d\geq 3$, that is, when~$k$ is polynomial in~$n$?

Finally, what can be said for the dual parameterization, that is, when the diversity measure is the maximum size of the intersection of two cuts?  Is the problem \FPT in this case? Notice that the maximum family of disjoint~$s$-$t$-cuts can be found in polynomial time by the results of Wagner~\cite{wagner1990disjointcuts}.

\end{document}

%% file: Cuts.pdf_t
\begin{picture}(0,0)%
\includegraphics{Cuts.pdf}%
\end{picture}%
\setlength{\unitlength}{3947sp}%
\begingroup\makeatletter\ifx\SetFigFont\undefined%
\gdef\SetFigFont#1#2#3#4#5{%
  \reset@font\fontsize{#1}{#2pt}%
  \fontfamily{#3}\fontseries{#4}\fontshape{#5}%
  \selectfont}%
\fi\endgroup%
\begin{picture}(4759,2415)(354,-1414)
\put(4153,-1336){\makebox(0,0)[lb]{\smash{{\SetFigFont{12}{14.4}{\rmdefault}{\mddefault}{\updefault}{\color[rgb]{0,0,0}$(A,B)$}%
}}}}
\put(825,-796){\makebox(0,0)[lb]{\smash{{\SetFigFont{12}{14.4}{\rmdefault}{\mddefault}{\updefault}{\color[rgb]{0,0,0}$v_{j_1}$}%
}}}}
\put(1576,-1111){\makebox(0,0)[lb]{\smash{{\SetFigFont{12}{14.4}{\rmdefault}{\mddefault}{\updefault}{\color[rgb]{0,0,0}$d-1$}%
}}}}
\put(2185,-803){\makebox(0,0)[lb]{\smash{{\SetFigFont{12}{14.4}{\rmdefault}{\mddefault}{\updefault}{\color[rgb]{0,0,0}$v_{j_2}$}%
}}}}
\put(3518,-803){\makebox(0,0)[lb]{\smash{{\SetFigFont{12}{14.4}{\rmdefault}{\mddefault}{\updefault}{\color[rgb]{0,0,0}$v_{j_k}$}%
}}}}
\put(4324,-803){\makebox(0,0)[lb]{\smash{{\SetFigFont{12}{14.4}{\rmdefault}{\mddefault}{\updefault}{\color[rgb]{0,0,0}$v_{i+1}$}%
}}}}
\put(3978,-803){\makebox(0,0)[lb]{\smash{{\SetFigFont{12}{14.4}{\rmdefault}{\mddefault}{\updefault}{\color[rgb]{0,0,0}$v_{i}$}%
}}}}
\put(369,-784){\makebox(0,0)[lb]{\smash{{\SetFigFont{12}{14.4}{\rmdefault}{\mddefault}{\updefault}{\color[rgb]{0,0,0}$s$}%
}}}}
\put(736,-1329){\makebox(0,0)[lb]{\smash{{\SetFigFont{12}{14.4}{\rmdefault}{\mddefault}{\updefault}{\color[rgb]{0,0,0}$(A_1,B_1)$}%
}}}}
\put(2026,-1330){\makebox(0,0)[lb]{\smash{{\SetFigFont{12}{14.4}{\rmdefault}{\mddefault}{\updefault}{\color[rgb]{0,0,0}$(A_2,B_2)$}%
}}}}
\put(3375,-1335){\makebox(0,0)[lb]{\smash{{\SetFigFont{12}{14.4}{\rmdefault}{\mddefault}{\updefault}{\color[rgb]{0,0,0}$(A_k,B_k)$}%
}}}}
\end{picture}%